\def\showauthornotes{0}
\def\showdraftbox{0}
\newcommand{\eps}{\varepsilon}
\newenvironment{fminipage}%
  {\begin{Sbox}\begin{minipage}}%
  {\end{minipage}\end{Sbox}\fbox{\TheSbox}}
\def\abs#1{\left| #1 \right|}
\newcommand{\norm}[1]{\ensuremath{\left\lVert #1 \right\rVert}}
\newcommand\rea{\mathbb R}
\newcommand{\marginlabel}[1]%
{\mbox{}\marginpar{\it{\raggedleft\hspace{0pt}#1}}}
\DeclareMathOperator{\poly}{poly}
\definecolor{Mygray}{gray}{0.8}
\let\csname ifcommentflag\expandafter\endcsname
\newcommand{\todo}[1]{\colorbox{Mygray}{\color{red}#1}}
\newcommand{\todo}[1]{}
\newcommand{\Authornote}[2]{{\sf\small\color{red}{[#1: #2]}}}
\newcommand{\Authoredit}[2]{{\sf\small\color{red}{[#1]}\color{blue}{#2}}}
\newcommand{\Authorcomment}[2]{{\sf \small\color{gray}{[#1: #2]}}}
\newcommand{\Authorfnote}[2]{\footnote{\color{red}{#1: #2}}}
\newcommand{\Authorfixme}[1]{\Authornote{#1}{\textbf{??}}}
\newcommand{\Authormarginmark}[1]{\marginpar{\textcolor{red}{\fbox{
#1:!}}}}
\newcommand{\Authornote}[2]{}
\newcommand{\Authoredit}[2]{}
\newcommand{\Authorcomment}[2]{}
\newcommand{\Authorfnote}[2]{}
\newcommand{\Authorfixme}[1]{}
\newcommand{\Authormarginmark}[1]{}
\newlength{\pgmtab}  
\let\originalleft\left
\let\originalright\right
\renewcommand{\left}{\mathopen{}\mathclose\bgroup\originalleft}
  \renewcommand{\right}{\aftergroup\egroup\originalright}
\def\abs#1{\left|#1  \right|}
\def\norm#1{\left\| #1 \right\|}
\newcommand\ppsi{\boldsymbol{\mathit{\psi}}}
\newcommand\PPi{\boldsymbol{\Pi}}
\newcommand\ppi{\boldsymbol{\pi}}
\newcommand\pp{\boldsymbol{\mathit{p}}}
\newcommand\xx{\boldsymbol{\mathit{x}}}
\renewcommand\AA{\boldsymbol{\mathit{A}}}
\newcommand\DD{\boldsymbol{\mathit{D}}}
\newcommand\NN{\boldsymbol{\mathit{N}}}
\newcommand\MM{\boldsymbol{\mathit{M}}}
\newcommand\LL{\boldsymbol{\mathit{L}}}
\newcommand\Otil{\widetilde{O}}
\def\qedsketch{\ifmmode\Box\else{\unskip\nobreak\hfil
\penalty50\hskip1em\null\nobreak\hfil$\Box$
\parfillskip=0pt\finalhyphendemerits=0\endgraf}\fi}
\newlength{\tpush}
\newcommand{\handout}[5]{
   \noindent
   \begin{center}
   \framebox{ \vbox{ \hbox to \textwidth { {\bf \coursenum\ :\  \coursename} \hfill #5 }
       \vspace{3mm}
       \hbox to \textwidth { {\Large \hfill #2  \hfill} }
       \vspace{1mm}
       \hbox to \textwidth { {\it #3 \hfill #4} }
     }
   }
   \end{center}
   \vspace*{4mm}
   \newcommand{\lecturenum}{#1}
   \addcontentsline{toc}{chapter}{Lecture #1 -- #2}
}
\renewcommand{\epsilon}{\varepsilon}
\newcommand{\lawrence}{\Authornote{Lawrence}}
\begin{document}
\title{A New Approach to Estimating Effective Resistances and Counting Spanning Trees in Expander Graphs
}
\author{Lawrence Li\\Department of Computer Science\\University
of Toronto\\lawrenceli@cs.toronto.edu
\and 
Sushant Sachdeva\\Department of Computer Science\\University
of Toronto\\sachdeva@cs.toronto.edu
}

\maketitle
\begin{abstract}
  We demonstrate that for expander graphs, for all $\eps > 0,$ there exists a data structure of size $\Otil(n\eps^{-1})$ which can be used to return $(1 + \eps)$-approximations to effective resistances in $\Otil(1)$ time per query. 
  Short of storing all effective resistances, previous best approaches could achieve $\Otil(n\eps^{-2})$ size and $\Otil(\eps^{-2})$ time per query by storing  Johnson–Lindenstrauss vectors for each vertex, or  $\Otil(n\eps^{-1})$ size and  $\Otil(n\eps^{-1})$ time per query by storing a spectral sketch.

  Our construction is based on two key ideas: 1) $\eps^{-1}$-sparse, $\eps$-additive approximations to $\bm{\sigma_u}$ for all $u$, vectors similar to $\DD\LL^+\bm{1_u}$, can be used to recover $(1 + \eps)$-approximations to the effective resistances, 2) In expander graphs, only $\Otil(\eps^{-1})$ coordinates of $\bm{\sigma_u}$ are larger than $\eps.$
  We give an efficient construction for such a data structure in $\Otil(m + n\eps^{-2})$ time via random walks.
  This results in an algorithm on expander graphs for computing $(1+\eps)$-approximate effective resistances for $s$ vertex pairs that runs in $\Otil(m + n\eps^{-2} + s)$ time, improving over the previously best known running time of $m^{1 + o(1)} +  (n + s)n^{o(1)}\eps^{-1.5}$ for $s = \omega(n\eps^{-0.5}).$
  
  We employ the above algorithm to compute a $(1+\delta)$-approximation to the number of spanning trees in an expander graph, or equivalently, approximating the (pseudo)determinant of its Laplacian in $\Otil(m + n^{1.5}\delta^{-1})$ time. This improves on the previously best known result of $m^{1+o(1)} + n^{1.875+o(1)}\delta^{-1.75}$ time, and matches the best known size of determinant sparsifiers.
\end{abstract}

\thispagestyle{empty}

\newpage

\setcounter{page}{1}

\section{Introduction}
\paragraph{Estimating Laplacian Determinants.}
Given a graph $G$ with $n$ vertices and $m$ edges,
what is the running-time complexity of estimating the number of spanning trees in $G$? 
In the 19th century, the celebrated matrix-tree theorem of Kirchhoff~\cite{Kirchhoff47} established that the number of spanning trees is equal to $1/n$ times the product of the non-trivial eigenvalues of the graph Laplacian, or equivalently the (pseudo)determinant of the Laplacian.\footnote{The usual determinant for the Laplacian is 0 since it has a trivial eigenvalue of 0. In this paper, we overload the notation Laplacian determinant to denote the pseudodeterminant, the product of all non-zero eigenvalues}.
This implies an $O(n^\omega)$ time algorithm for exactly computing the number of spanning trees in $G.$

This result naturally extends to estimating the total weight of all spanning trees for graphs with non-negative weights $w$ on the edges, where the weight of a tree $w(T)$ is defined as the product of the weights of the edges in the tree $\prod_{e \in T} w(e).$

The first result to improve on the matrix multiplication time bound was by \textcite{DurfeePPR17}, giving an $\Otil(n^2\delta^{-2})$ time\footnote{The $\Otil(\cdot)$ notation hides $\poly(\log n)$ factors.} algorithm for returning a $(1+\delta)$-approximation to the Laplacian determinant.
 \textcite{ChuGPSSW18} improved the running time bound to $m^{1+o(1)} + n^{1.875+o(1)}\delta^{-1.75}.$

The current approaches to estimating the determinant are based on the construction of determinant sparsifiers. A determinant sparsifier of a graph is a sparse (reweighted) subgraph that approximately preserves the determinant.
\cite{DurfeePPR17} showed how to construct $(1+\delta)$-determinant sparsifiers for general graphs with $\Otil(n^{1.5}\delta^{-1})$ edges.\footnote{Theorem 1.1 in \textcite{DurfeePPR17} states the number of edges required as $O(n^{1.5}\delta^{-2}).$ The same construction with a better choice of parameters leads to determinant-sparsifiers with $O(n^{1.5}\delta^{-1})$ edges.}
Even considering random subgraphs of the complete graph $K_n$, the best construction of a determinant sparsifier requires $\Omega(n^{1.5})$ edges~\cite{Janson94}. 
This results in a natural lower bound of $\Omega(n^{1.5})$ for the running time of current approaches to determinant estimation. That leads to the following question:
\begin{quote}
    Can we compute an $O(1)$-approximation to the Laplacian determinant in $\Otil(n^{1.5})$ time? 
\end{quote}

\paragraph{Estimating Effective Resistances.}
As noted above, the current algorithms for estimating Laplacian determinants are far from this $\Omega(n^{1.5})$ time bound.
The bottleneck in the current approaches is the time required for constructing determinant sparsifiers. 
This, in turn, requires estimating the effective resistances of $s=\Omega(n^{1.5})$ edges within a multiplicative factor of $(1+\eps)$ for $\eps = O(n^{-0.25}).$ The effective resistance between $u,v \in V$ is denoted by $R(u,v)$ or $R^G(u,v)$ and is defined as the equivalent effective resistance between the vertices $u,v$ if the graph is thought of as an electrical network with edge $e$ having resistance $1/w_e.$ We can also write $R(u,v) = (\bm{1_u}-\bm{1_v})^{\top}\LL^+(\bm{1_u}-\bm{1_v})$.
This motivates the following question: 
\begin{quote}
    How efficiently can we compute $(1+\eps)$-approximations to effective resistances of $s$ edges?
\end{quote}

Table~\ref{table:r-eff-algos} summarizes the known algorithms for estimating effective resistances.
The current best algorithm estimates $s$ effective resistances in $m^{1+o(1)} + (n+s)n^{o(1)}\eps^{-1.5}$ time~\cite{ChuGPSSW18}, resulting in an $m^{1+o(1)}+n^{1.875+o(1)}$ time bound for estimating the determinant.
Observe that any $\eps^{-1}$ factors on $s$ will result in a running time worse than $O(n^{1.5})$ for estimating the determinant.

\begin{table}
\begin{tabular}{ c|c|c|c }
Citation & Total running time & Det. Running time & Key ideas \\
\hline
\cite{SpielmanS08} & $m\eps^{-2} + s\eps^{-2}$ & $m\sqrt{n} + n^2$ & JL + Lap. Solvers \\
& $m + n\eps^{-4} + s\eps^{-2}$ & $n^2$ & Spectral-sparsifiers + JL + Lap. Solvers \\
\cite{JambulapatiS18} & $n^2\eps^{-1}$ & $n^{2.25}$ & Spectral-sketches + Lap. Solvers \\
\cite{DurfeeKPRS17} & $m + n\eps^{-2} + s\eps^{-2}$ & $n^2$ & Approx. Schur  \\
\cite{ChuGPSSW18} & $m +  n\eps^{-1.5} + s\eps^{-1.5}$ & $m + n^{1.875}$ & Spectral-sketches + Approx. Schur \\
This paper & $m+n\eps^{-2} + s$ & $m + n^{1.5}$ &  only for expanders, additive approx. to $\bm{\sigma_u}$
\end{tabular}
\caption{\label{table:r-eff-algos}
Known algorithms for approximating effective resistances of $s$ vertex pairs  to within $(1+\varepsilon)$ in a graph with $m$ edges and $n$ vertices, and the implied running time for estimating determinants.
The running times expressed above hide $m^{o(1)}$ factors. JL refers to methods using the Johnson–Lindenstrauss lemma~\cite{JohnsonL84}, and "Lap. Solvers" indicate using nearly-linear time Laplacian Solvers~\cite{SpielmanT04}. "Approx. Schur" indicates using approximate Schur complements~\cite{KyngS16}.}
\end{table}

\paragraph{Data-Structure view.}
Say we're allowed to preprocess $G$ to build a data-structure that occupies small space. After pre-processing, pairs of vertices arrive online and the data-structure must answer a $(1+\eps)$-approximate effective resistance in $G$ for each pair. The trivial approach is to compute and store all pairwise effective resistances using $O(n^2)$ space, and answer each query in $O(1)$ time.

The first non-trivial approach is to use \textcite{SpielmanS08} algorithm for estimating effective resistances combining Johnson-Lindenstrauss dimension reduction with fast Laplacian solvers~\cite{SpielmanT04}, resulting in a data-structure that requires $O(n\eps^{-2} \log n)$ storage\footnote{For convenience, we are counting the complexity in terms of words of $O(\log n)$ size, so the bit complexity is higher by a factor of $O(\log n)$}. The data structure can be built in $\Otil(\min\{m\eps^{-2}, m + n\eps^{-4}\})$ time, and can answer each query efficiently in $O(\eps^{-2} \log n)$ time. 
Another approach is to store a spectral sketch~\cite{JambulapatiS18, ChuGPSSW18}, a sparse subgraph of $G$ that approximately preserves effective resistances. 
Such a sketch can be computed in $\Otil(m)$ time, and requires $\Otil(n\eps^{-1})$
space.
However, the computation time now is much slower, requiring $\Otil(n\eps^{-1})$ time.
There is an immediate open question:
\begin{quote}
    Can we design a data-structure that requires $\widetilde{O}(n\eps^{-1})$ space, and can answer effective resistance queries efficiently, say $\widetilde{O}(\eps^{-1})$  or $O(1)$ time per query? 
\end{quote}
We note that if we can build such a data-structure efficiently, it immediately implies an algorithm for estimating effective resistances. However, building such a data-structure is a harder question than estimating effective resistances.
\subsection{Our Results}
We make progress on all aforementioned questions on expander graphs.
\begin{theorem}
\label{thm:intro-thm1}
For a graph $G$ with expansion $\widetilde{\Omega}(1),$ for any $\eps > 0$, in time $\Otil(m+n\eps^{-2})$, we can compute a data-structure with storage $\Otil(n\eps^{-1}),$ that can answer $(1+\eps)$-approximations to the effective resistance between any vertex pair with high probability in $\Otil(1)$ time.
This gives an algorithm that with high probability returns  $(1+\eps)$-approximate effective resistances for $s$ vertex pairs in $G$ in  $\Otil(m+n\eps^{-2}+s)$ time.
\end{theorem}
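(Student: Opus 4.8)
The plan is to realize the data structure as a collection of $n$ sparse vectors — one per vertex, extracted from short random walks — and to answer each query $(u,v)$ by reading off only $O(1)$ of their entries. Let $\WW$ be the lazy random walk matrix of $G$, let $\pi$ (with $\pi_v=d_v/2m$) be its stationary distribution, and let $\ZZ=\sum_{t\ge 0}(\WW^t-\vone\pi^{\top})$ be the associated fundamental matrix, so that $(\II-\WW)\ZZ=\II-\vone\pi^{\top}$. Define $\bm{\sigma_u}(v):=\tfrac12 \ZZ_{uv}=\tfrac12\sum_{t\ge0}(\WW^t_{uv}-\pi_v)$; this is the ``$\DD\LL^{+}\bm{1_u}$-like'' vector of the introduction. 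Since $\LL=2\DD(\II-\WW)$, the vector $\tfrac12 \ZZ\DD^{-1}(\bm{1_u}-\bm{1_v})$ solves $\LL\phi=\bm{1_u}-\bm{1_v}$, hence $R(u,v)=(\bm{1_u}-\bm{1_v})^{\top}\tfrac12 \ZZ\DD^{-1}(\bm{1_u}-\bm{1_v})$, which — expanding and using reversibility $d_u\ZZ_{uv}=d_v\ZZ_{vu}$ — collapses to the key recovery formula
\[
 R(u,v)=\frac{\bm{\sigma_u}(u)}{d_u}+\frac{\bm{\sigma_v}(v)}{d_v}-\frac{2\,\bm{\sigma_u}(v)}{d_v}.
\]
Thus, if for every $u$ we have stored a vector $\widetilde{\bm{\sigma_u}}$ that approximates $\bm{\sigma_u}$ entrywise to additive error $\eta:=\eps/\polylog(n)$, is supported on $\Otil(\eps^{-1})$ coordinates, and is equipped with a hash table for $O(1)$-time entry lookup, then any query is answered in $\Otil(1)$ time by evaluating the formula (reading $\widetilde{\bm{\sigma_u}}(u),\widetilde{\bm{\sigma_v}}(v),\widetilde{\bm{\sigma_u}}(v)$, using $0$ for any coordinate not present). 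The storage is $\sum_u\Otil(\eps^{-1})=\Otil(n\eps^{-1})$, and the second statement of the theorem follows by running the construction once and then answering the $s$ pairs — since all randomness lives in preprocessing, a single high-probability event makes all $s$ answers simultaneously correct, so no union bound over queries is needed.

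Two facts justify the requirements on $\widetilde{\bm{\sigma_u}}$. First (``Idea 1''): $R(u,v)\ge\tfrac12(1/d_u+1/d_v)$ holds for every graph (inject a unit of current at $u$ and extract it at $v$; some edge at $u$ carries $\ge 1/d_u$ of it, so the $u$-to-$v$ potential drop $R(u,v)$ is $\ge 1/d_u$, and symmetrically), so an additive $\eta$ error in each of the three entries above changes the estimate by at most $\tfrac{\eta}{d_u}+\tfrac{3\eta}{d_v}\le 8\eta\,R(u,v)\le\eps\,R(u,v)$, a genuine $(1+\eps)$-approximation. Second (``Idea 2''): expansion $\widetilde\Omega(1)$ forces spectral gap $\widetilde\Omega(1)$ (Cheeger), so the lazy walk mixes in $\tau=\Otil(1)$ steps and $\bm{\sigma_u}(v)$ equals $\tfrac12\sum_{t=0}^{\tau}(\WW^t_{uv}-\pi_v)$ up to inverse-polynomial error. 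The positive part of this truncation has $\ell_1$-norm $\le\tfrac12\sum_v\sum_{t\le\tau}\WW^t_{uv}=\tfrac12(\tau+1)=\Otil(1)$, and its negative part is dominated by $\tfrac12(\tau+1)\pi$, also of $\ell_1$-norm $\Otil(1)$; hence at most $\Otil(\eps^{-1})$ coordinates of $\bm{\sigma_u}$ exceed $\eta$ in magnitude, and discarding the rest yields a valid $\eta$-additive, $\Otil(\eps^{-1})$-sparse approximation (the diagonal $\bm{\sigma_u}(u)=\tfrac12\sum_t(\WW^t_{uu}-\pi_u)\ge\tfrac12(1-\pi_u)=\widetilde\Omega(1)$ is always retained).

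To build the $\widetilde{\bm{\sigma_u}}$ within the claimed time: spend $\Otil(m)$ to read $G$ and form $\pi$; then from each vertex $u$ simulate $N=\Otil(\eps^{-2})$ independent lazy walks of length $\tau$, and for each $t\le\tau$ let $\widehat p_t$ be the empirical distribution of the position at step $t$. A Chernoff bound together with a union bound over the $\le n$ endpoints, the $\tau+1$ values of $t$, and the $n$ sources gives $\|\widehat p_t-\WW^t_{u\cdot}\|_{\infty}\le\eta$ in all cases with high probability, at the cost of only $\polylog(n)$ extra factors in $N$. Setting $\widehat{\bm{\sigma_u}}(v)=\tfrac12\sum_{t\le\tau}(\widehat p_t(v)-\pi_v)$ and keeping the coordinates whose estimate exceeds $\eta$ then produces, simultaneously for all $u$, the required $\eta$-additive $\Otil(\eps^{-1})$-sparse approximations; the total walk cost is $n\cdot N\cdot\tau=\Otil(n\eps^{-2})$, so the whole preprocessing is $\Otil(m+n\eps^{-2})$.

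The step I expect to be the real obstacle is making ``Idea 1'' quantitatively airtight: choosing the precise normalization of $\bm{\sigma_u}$ and the additive budget $\eta$ so that the $O(1)$ rescaled entry-errors provably sum to at most $\eps\,R(u,v)$ (not merely $O(\eps)\,R(u,v)$), \emph{while} holding the sparsity at $\Otil(\eps^{-1})$ and carefully tracking the $\polylog$ factors hidden in ``expansion $\widetilde\Omega(1)$'' as they pass through the mixing-time estimate, the truncation-tail bound, and the Chernoff bound. A secondary, more routine point is ensuring the random-walk estimates are accurate uniformly over all $\Theta(n^2)$ pairs using only $\Otil(n\eps^{-2})$ work — which the union bound above handles precisely because each walk has length only $\Otil(1)$.
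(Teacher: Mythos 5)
Your proposal is correct and follows essentially the same route as the paper: the same vectors $\bm{\sigma_u}(v)=\tfrac12\sum_t(\WW^t_{uv}-\pi_v)$, the same $\ell_1$-norm/mixing argument for $\Otil(\eps^{-1})$-sparsity, the same conversion from additive to multiplicative error via $R(u,v)\ge\tfrac12(1/d_u+1/d_v)$, and the same Monte Carlo random-walk construction with a Chernoff-plus-union-bound analysis. The only cosmetic difference is that you invoke reversibility to collapse the recovery identity to three terms, whereas the paper keeps the symmetric four-term formula and derives it from a power-series expansion of $\bm{L}^+$ rather than from the fundamental matrix.
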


This establishes for the first time the existence of a data-structure with $\Otil(n\eps^{-1})$ space that can answer each query very efficiently -- in $\widetilde{O}(1)$ time; the previous best running time with the same amount of space was $\widetilde{\Omega}(n\eps^{-1}).$ 
Note that for $\eps = o(1),$ this is the first data-structure that can answer effective resistance queries in $\Otil(1)$ time while only using  $o(m)$ space, i.e. without essentially storing them all.
For most applications, including graph sparsification and estimating Laplacian determinants, we have $s \ge \widetilde{\Omega}(m)$ or $s \ge \widetilde{\Omega}(n\eps^{-2}),$ resulting in an improvement of $\Omega(\eps^{-1.5})$ in running time compared to the $m^{1+o(1)} + (n+s)n^{o(1)}\eps^{-1.5}$ bound from~\cite{ChuGPSSW18}. 

Building on this, we give an improved algorithm for estimating Laplacian determinants for expander graphs. 
\begin{theorem}
\label{thm:intro-thm2}
For graphs $G$ with $\widetilde{\Omega}(1)$ expansion, for any $\delta > 0,$ there is an algorithm that returns a $(1+\delta)$-approximation of the Laplacian determinant in $\Otil(m+n^{1.5}\delta^{-1})$ with high probability.
\end{theorem}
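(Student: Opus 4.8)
\textbf{Proof proposal for Theorem~\ref{thm:intro-thm2}.}

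The plan is to combine the effective-resistance data structure of Theorem~\ref{thm:intro-thm1} with the determinant-sparsifier framework of \textcite{DurfeePPR17}. Recall that a $(1+\delta)$-determinant sparsifier is obtained by sampling roughly $\Otil(n^{1.5}\delta^{-1})$ edges according to (approximate) leverage scores $w_e R(u,v)$ and reweighting; by the result of \textcite{DurfeePPR17}, to build such a sparsifier it suffices to estimate the effective resistances of the $s = \Otil(n^{1.5}\delta^{-1})$ sampled edges to within a multiplicative factor $(1+\eps)$ with $\eps = \Theta(\delta n^{-1/4})$ (the standard choice that balances sparsifier size against the error each edge contributes to the determinant). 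Once the sparsifier $H$ is in hand, its determinant can be computed exactly in $\Otil(n^\omega)$ time on its $\Otil(n^{1.5}\delta^{-1})$ edges --- but we can do better by recursing or by noting the sparsifier has $\Otil(n^{1.5}\delta^{-1})$ edges and a direct Cholesky/eigenvalue computation on an $n$-vertex graph costs $\Otil(n^\omega)$; to stay within the claimed bound one instead observes that a single round already reduces to a graph on which the matrix-tree determinant can be evaluated within the target time, or one applies the recursive determinant-reduction of \cite{DurfeePPR17,ChuGPSSW18} which only needs effective resistances at each level.

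Concretely, the steps in order: (1) Verify that $G$ being an expander with $\widetilde{\Omega}(1)$ expansion is preserved (up to constants in the expansion) under the sampling steps that arise in the determinant-sparsifier recursion, or alternatively argue that we only ever need one level of sparsification because the sparsifier on $\Otil(n^{1.5}\delta^{-1})$ edges already admits a fast exact determinant computation relative to $\Otil(n^{1.5}\delta^{-1})$; (2) set $\eps = \Theta(\delta n^{-1/4})$ and invoke Theorem~\ref{thm:intro-thm1} to preprocess in $\Otil(m + n\eps^{-2}) = \Otil(m + n \cdot n^{1/2}\delta^{-2}) = \Otil(m + n^{1.5}\delta^{-2})$ time --- here I will need to be slightly careful, since $n^{1.5}\delta^{-2}$ is worse than the claimed $n^{1.5}\delta^{-1}$, so the argument must exploit that the sampling for the sparsifier only needs $\eps^{-1}$-type dependence on the number of queries, or that a coarser $\eps$ suffices at the determinant level because determinant error aggregates more favorably than worst-case per-edge error (this is exactly the $\delta^{-1}$ vs $\delta^{-2}$ improvement already noted in the \cite{DurfeePPR17} footnote in the excerpt); (3) issue the $s = \Otil(n^{1.5}\delta^{-1})$ queries, each in $\Otil(1)$ time, for a total of $\Otil(n^{1.5}\delta^{-1})$; (4) build the sparsifier $H$ and read off its determinant, accounting via the matrix-tree theorem and the $(1+\delta)$-preservation guarantee.

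The main obstacle I anticipate is step (2)/(1): reconciling the $\eps^{-2}$ in the preprocessing cost of Theorem~\ref{thm:intro-thm1} with the target $n^{1.5}\delta^{-1}$ runtime. The resolution should be that the determinant sparsifier only requires $\eps = \Theta(\delta^{1/2} n^{-1/4})$ (not $\delta n^{-1/4}$) because the relevant quantity is a sum over $\Omega(n^{1.5})$ edges whose individual multiplicative errors combine to a $(1+\eps^2 n^{1/2}) = (1+\delta)$ factor in the determinant, matching the better parameter choice flagged in the \cite{DurfeePPR17} footnote; with that choice, $\eps^{-2} = \delta^{-1} n^{1/2}$, giving preprocessing cost $\Otil(m + n \cdot \delta^{-1} n^{1/2}) = \Otil(m + n^{1.5}\delta^{-1})$ as desired, and $s\eps^{-1}$-free query cost $\Otil(s) = \Otil(n^{1.5}\delta^{-1})$. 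A secondary concern is ensuring the expander hypothesis is inherited by any intermediate graph in the reduction; if it is not, one restricts to a non-recursive single-shot sparsification and computes the final determinant directly, checking that $\Otil(n^{1.5}\delta^{-1})$ edges plus an $\Otil(n^\omega)$-free evaluation (e.g.\ via the sparsifier being itself sparse enough for a fast Laplacian-based spanning-tree count) stays within budget --- this bookkeeping is the part most likely to need care but is not conceptually hard.
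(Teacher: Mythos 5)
Your high-level plan and your parameter calculation are right: the paper does set $\eps = \Theta(\delta^{1/2}n^{-1/4})$ and $s = \Otil(n^{1.5}\delta^{-1})$ (so that $n^2\eps^2/s, n^3/s^2 \le \delta^2$), which makes the preprocessing cost $\Otil(m + n\eps^{-2}) = \Otil(m+n^{1.5}\delta^{-1})$ and the query cost $\Otil(s)$, exactly resolving the $\eps^{-2}$-versus-$\delta^{-1}$ tension you flag. But there is a genuine gap in how you propose to finish. Your fallback --- ``one level of sparsification, then compute the determinant of $H$ directly'' --- does not work: the determinant sparsifier has $\Otil(n^{1.5}\delta^{-1})$ edges but still $n$ vertices, so any direct determinant evaluation costs $\Omega(n^\omega)$ (or at best $\Otil(n^2)$ via the earlier methods), blowing the budget. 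You are therefore forced into the recursive decomposition $\det_+(\bm{L_G}) = \det(\bm{L_{[V_2,V_2]}})\cdot\det_+(\bm{Sc}(\bm{L},V_1))$ of \textcite{DurfeePPR17}, and then the expander-preservation question you dismiss as ``bookkeeping'' becomes the actual technical content of the proof.

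Concretely, the paper needs three things you do not supply: (i) taking a Schur complement can only increase the normalized spectral gap (Lemma~\ref{lem:schurincreasesnu2}, proved by an interlacing-style argument mapping the quadratic form of $\bm{D_C}^{-1/2}\bm{S}\bm{D_C}^{-1/2}$ into that of $\bm{D}^{-1/2}\bm{L}\bm{D}^{-1/2}$); (ii) determinant sparsification degrades the gap by only a $(1-O(\delta^{1/4}))$ factor (Lemma~\ref{lem:detsparsifypreservesnu2}), so the gap stays $\Otil(\nu_2)$ across all $O(\log n)$ levels; and (iii) the $2$-DD branch requires separate treatment: $\bm{L_{[V_2,V_2]}}$ is not itself an expander subgraph of $G$, so the paper completes it to a Laplacian, eliminates the added vertex, shows the resulting Schur complement has gap at least $\alpha/(1+\alpha)$, and simulates random walks on the implicitly-represented clique (Lemma~\ref{lem:2ddreff}) because that Schur complement is too dense to write down. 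Without (i)--(iii) you cannot invoke Theorem~\ref{thm:intro-thm1} at the lower levels of the recursion at all, so the proof as proposed does not close.
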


This improves on the previous best running time of  $m^{1+o(1)}+n^{1.875+o(1)}\delta^{-1.75}$ from~\cite{ChuGPSSW18}. Ours is the first algorithm that matches the natural $\Omega(n^{1.5})$ lower bound on the size of determinant sparsifiers. 
Our algorithm extends to estimating determinants of a symmetric $(1 + \alpha)$-diagonally dominant (DD) matrices (a submatrix $\MM$ of any Laplacian that 
satisfies $\bm{M}_{uu} \ge (1 + \alpha) \sum_{v : v \neq u} |\bm{M}_{uv}|$). 
\begin{theorem}
\label{thm:intro-thm3}
    For any $\delta,\alpha > 0,$ given a symmetric  $(1+\alpha)$-DD matrix $\MM \in \rea^{n \times n},$ with $m$ non-zeroes, there is an algorithm that returns a $(1+\delta)$-approximate determinant for $\MM$ in $\Otil(m + n^{1.5}\delta^{-1}(1+\alpha^{-3}))$ time with high probability.
\end{theorem}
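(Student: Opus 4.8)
The plan is to reduce to the Laplacian case already handled by Theorem~\ref{thm:intro-thm2}. Since $\MM$ is a principal submatrix of some Laplacian, its off-diagonal entries are nonpositive and its row sums are nonnegative, so I can build directly from $\MM$ a graph $H$ on the $n$ given indices together with one fresh ``ground'' vertex $g$, with edge weights $w_{uv}=|\bm{M}_{uv}|$ for distinct original indices $u,v$ and $w_{ug}=\bm{M}_{uu}-\sum_{v\neq u}|\bm{M}_{uv}|\ge 0$; then $\LL_H$ restricted to the original indices is exactly $\MM$. By the weighted matrix--tree theorem $\det\MM=\tau_w(H)$, the weighted spanning-tree count, and by Kirchhoff's theorem $\tau_w(H)=\operatorname{pdet}(\LL_H)/(n+1)$, so a $(1+\delta)$-approximation to $\operatorname{pdet}(\LL_H)$ yields one for $\det\MM$ after dividing by $n+1$. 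The graph $H$ has $n+1$ vertices and $O(m)$ edges, and is built in $\Otil(m+n)$ time.

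Next I would lower bound the expansion of $H$. The $(1+\alpha)$-DD hypothesis says precisely that $w_{ug}\ge\tfrac{\alpha}{1+\alpha}\,\bm{M}_{uu}=\tfrac{\alpha}{1+\alpha}\deg_H(u)$ for every non-ground vertex $u$. Hence for any $S\subseteq V(H)$ with $g\notin S$, the edges from $S$ to $g$ alone carry weight at least $\tfrac{\alpha}{1+\alpha}\operatorname{vol}(S)$ (and symmetrically when $g\in S$, using $V(H)\setminus S$), so $H$ has conductance $\phi\ge\Omega\!\big(\tfrac{\alpha}{1+\alpha}\big)=\Omega(\min\{\alpha,1\})$; by Cheeger's inequality its normalized-Laplacian spectral gap is $\Omega\!\big((\tfrac{\alpha}{1+\alpha})^2\big)$. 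In particular $H$ is connected, so $\det\MM\neq 0$.

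Then I would run the algorithm behind Theorem~\ref{thm:intro-thm2} on $H$. The one gap is that Theorems~\ref{thm:intro-thm1} and~\ref{thm:intro-thm2} are stated for expansion $\widetilde\Omega(1)$ while $H$ only has $\phi=\Omega(\alpha/(1+\alpha))$, so I would re-derive those results keeping $\phi$ explicit. The two places $\phi$ enters are: (i) the random-walk estimation of the vectors $\bm{\sigma_u}$, whose cost scales with a mixing/absorption quantity of order $\Otil(\phi^{-2})$; and (ii) the structural claim ``only $\Otil(\eps^{-1})$ coordinates of $\bm{\sigma_u}$ exceed $\eps$'', which for conductance $\phi$ should read $\Otil(\eps^{-1}\phi^{-1})$. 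Feeding these into the determinant-sparsifier pipeline of Theorem~\ref{thm:intro-thm2} (which queries $s=\widetilde O(n^{1.5}\delta^{-1})$ effective resistances, via the Theorem~\ref{thm:intro-thm1} data structure, at the polynomially small accuracy $\eps$ dictated by that construction) should give total time $\Otil(m+n^{1.5}\delta^{-1}\phi^{-3})$; substituting $\phi^{-3}=O((1+\alpha^{-1})^3)=O(1+\alpha^{-3})$ yields the claimed bound.

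The main obstacle is exactly this last step: carefully propagating the conductance parameter through the proofs of Theorems~\ref{thm:intro-thm1}--\ref{thm:intro-thm2} and checking that all hidden $\widetilde O(1)$ factors degrade to at most $\poly(\phi^{-1})$ with the exponents composing to $3$ rather than something larger --- in particular re-proving, for a graph of conductance $\phi$, both the bound on the number of ``heavy'' entries of $\bm{\sigma_u}$ and the number of random-walk steps needed to estimate $\bm{\sigma_u}$ to additive error $\eps$. By contrast, the reduction to a Laplacian pseudodeterminant and the conductance lower bound for $H$ are routine; pinning down the exact power of $\alpha$ is where the care is needed.
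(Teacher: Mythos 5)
There is a genuine gap, and you correctly located it but did not close it: the conversion from conductance to spectral gap. The entire running-time analysis of Theorems~\ref{thm:intro-thm1} and~\ref{thm:intro-thm2} is driven by the normalized-Laplacian spectral gap $\nu_2$, not by the conductance $\phi$: the truncation length is $t_0=\Otil(\nu_2^{-1})$, and the sketch costs $\Otil(n\eps^{-2}t_0^{3})$, giving the $\nu_2^{-3}$ in Theorem~\ref{thm:main_det}. Your route bounds $\phi(H)=\Omega(\alpha/(1+\alpha))$ and then invokes Cheeger, which only yields $\nu_2=\Omega(\alpha^2/(1+\alpha)^2)$; plugging this into the pipeline gives $\Otil(m+n^{1.5}\delta^{-1}(1+\alpha^{-6}))$, not the claimed $1+\alpha^{-3}$. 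Your assertion that ``the exponents compose to $3$'' in terms of $\phi$ is exactly the unproved step, and it is false for the analysis as the paper carries it out (similarly, the count of heavy coordinates of $\bm{\sigma_u}$ is $\Otil(\eps^{-1}\nu_2^{-1})=\Otil(\eps^{-1}\phi^{-2})$, not $\Otil(\eps^{-1}\phi^{-1})$).

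The missing idea is to obtain a \emph{spectral} gap of $\alpha/(1+\alpha)$ directly, bypassing Cheeger. The paper does this in Lemma~\ref{lem:2ddreff} by eliminating the added vertex $x$ via a Schur complement: this replaces the star at $x$ by a weighted clique with weights $d'_ud'_v/d_x$, and a direct eigenvalue computation shows the clique's Laplacian, normalized by the new degrees, is $\succeq \frac{\alpha}{1+\alpha}$ times the identity on the subspace orthogonal to the kernel. Hence $\lambda_2$ of the normalized Schur complement is at least $\alpha/(1+\alpha)$, which gives $\nu_2^{-3}=O(1+\alpha^{-3})$ as required. Because the Schur complement is dense, the paper needs one further device you do not mention: the random walks are simulated \emph{implicitly}, stepping into the clique with probability $d''_u/(d_u+d''_u)$ and then sampling a neighbor proportional to $d'_v$, so the $\Theta(n^2)$ clique is never written down. (An alternative repair of your approach would be to lower-bound $\nu_2(H)$ itself by $\frac{\alpha}{1+\alpha}$ via the quadratic form of the star edges on vectors $\bm{x}$ with $\sum_v d_v\bm{x}_v=0$, rather than via conductance; but as written, the Cheeger step forfeits the claimed exponent.) Your reduction of $\det\MM$ to a spanning-tree count of the completed graph and the conductance estimate are fine and match the paper's completion step.
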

This is significant since the current approach to estimating Laplacian determinants~\cite{DurfeePPR17, ChuGPSSW18} is based on recursing onto two smaller matrices, each with roughly half the vertices, where one of them is  $2$-DD.

\subsection{Technical overview}
Effective resistance is usually studied as an $\ell_2^2$ norm of vectors. Specifically, the approach from \textcite{SpielmanS08} is based on constructing vectors $\bm{z_u}$ for every vertex $u$ such that $R(u,v) = \norm{\bm{z_u} - \bm{z_v}}_2^2$ for every $u,v.$
Applying Johnson-Lindenstrauss dimensionality reduction to these vectors gives vectors $\bm{\widetilde{z}_u}$ in $\Otil(\eps^{-2})$ dimensions such that  $R(u,v) = (1\pm \eps)\norm{\bm{\widetilde{z}_u} - \bm{\widetilde{z}_v}}_2^2.$ It is known that the JL dimension reduction strategy requires $\widetilde{\Omega}(\eps^{-2})$ dimensions~\cite{LarsenN17}.

Spectral sparsifiers~\cite{SpielmanS08, BSS12} and spectral sketches~\cite{JambulapatiS18, ChuGPSSW18} guarantee that there is a sparse graph $H$ such that for any fixed vector $\xx \in \rea^V,$  $\xx^{\top} \LL_H^+ \xx$ is the same as  $\xx^{\top} \LL_G^+ \xx$ up to a $(1\pm\eps)$ factor with high probability. Equivalently, the "error" $ \abs{\xx^{\top} \LL_H^+ \xx -  \xx^{\top} \LL_G^+ \xx}$ is bounded by $\eps \xx^{\top} \LL_G^+ \xx,$ which for expander graphs is  $\Otil({\eps \norm{\xx}_2^2}).$ 
Picking $\xx = \bm{1_u}-\bm{1_v}$ implies that this guarantee is sufficient for approximating effective resistances. 
While the notion of a spectral sketch is stronger than our goal of approximating all effective resistances, it is currently the only known approach to achieve this goal while requiring only $\Otil(n\eps^{-1})$ space.
Furthermore, it is known that every spectral sketch requires $\widetilde{\Omega}(n\eps^{-1})$ space~\cite{andonickqwz16}.

We observe that for estimating effective resistances on expanders, a much weaker error guarantee of $\eps \norm{\xx}_1^2$ suffices. This is because the $\ell_1$ norm and $\ell_2$ norm of $ \bm{1_u}-\bm{1_v}$ are within a constant factor, though for arbitrary vectors $\xx,$ $\norm{\xx}^2_1$ could be larger than $\norm{\xx}_2^2$ by a factor of $n.$
%
Building on this observation, we prove that for any graph,  $\eps$-additive approximations to $\{\bm{D}\bm{L}^+\PPi\bm{1_u}\}$ for all vertices $u,$ are sufficient for recovering $(1 + O(\eps))$-approximations to effective resistances between any vertex pair, where $\PPi$ is the projection orthogonal to the all ones vector.

However, the vectors $\{\bm{D}\bm{L}^+\PPi\bm{1_u}\}$ do not have small enough $\ell_1$ norm for us to obtain a sparsity guarantee with $\epsilon$-additive approximations, even on expanders. We instead show that there exist vectors $\{\bm{\sigma_u}\}$ which are similar to  $\{\bm{D}\bm{L}^+\PPi\bm{1_u}\}$ for which the previous argument still holds. Furthermore, in an expander graph, for every vertex $u$, the vector $\bm{\sigma_u}$ can only have $\Otil(\eps^{-1})$ coordinates larger than $\eps.$
The proof is based on interpreting these vectors in terms of random walks starting at the vertex $u,$ and exploiting that random walks in expanders mix in $\Otil(1)$ steps.
Thus, we only need $\Otil(n\eps^{-1})$ space to store $\eps$-additive approximations to all vectors $\{\bm{\sigma_u}\}_{u \in V},$ and can return $(1+\eps)$-approximations to each effective resistance query in $\Otil(1)$ time.

This proof also gives an algorithm for building such a data-structure.  Starting from each vertex $u,$ we will take $\Otil(\eps^{-2})$ random walks, each of length $\Otil(1).$ Standard concentration bounds guarantee that this is sufficient to estimate each coordinate of the vector up to an additive $\eps$ error with high probability. We preprocess the graph in $O(m)$ time to be able to sample each random walk step in $O(1)$ time. This results in a total preprocessing time of $\Otil(m+n\eps^{-2})$ and an algorithm for computing $s$ pairs of effective resistances in total time   $\Otil(m+n\eps^{-2} +s).$ 

In order to $(1+\delta)$-approximate the Laplacian determinant for expander graphs and $(1+\alpha)$-DD matrices, we build on the algorithm from~\cite{DurfeePPR17}.
We modify their algorithm to ensure that every subgraph generated by the algorithm, for which we need to estimate its Laplacian determinant, remains an expander.
The runtime bottleneck in our algorithm is the same as in~\cite{DurfeePPR17}, the time for estimating the effective resistance of $s = \Otil(n^{1.5}\delta^{-1})$ edges up to a $(1+\Theta(n^{-0.25}\delta^{-0.5}))$ multiplicative factor. Thus, with our modification ensuring expander graphs throughout, we can use our new algorithm for estimating effective resistances,
resulting in a running time of $\Otil(m+n^{1.5}\delta^{-1}).$

For a $(1+\alpha)$-DD matrix $\MM$, by adding a new vertex and connecting it to all existing vertices,  this matrix can be extended to a Laplacian. Crucially, the pseudodeterminant of the new matrix is exactly $n$ times the determinant of $M.$ If we eliminate this new vertex in the Schur complement, we obtain a graph with expansion $\alpha/(1+\alpha)$ (as measured by the second eigenvalue of its normalized Laplacian). However, this new graph is dense and hence cannot be written down explicitly.
Instead, we exploit the fact that we can simulate random walks on this graph by representing the clique implicitly. This allows us to implement our algorithm for estimating effective resistances for expanders on the Schur complement, and hence estimate its determinant in $\Otil(m + n^{1.5}\delta^{-1}(1 + \alpha^{-3}))$ time. 

\section{Preliminaries}

We are given a graph $G(V, E)$ with $n$ vertices and $m$ edges and edge weights bounded by $W$.  Let $\bm{A}$ denote the weighted adjacency matrix for $G,$ and $\bm{D}$ denote the diagonal matrix of weighted degrees. Then, the Laplacian $\bm{L}$ of $G$ is a $V$ by $V$ matrix defined as 
$\bm{L} = \bm{D} - \bm{A}$. 
The normalised Laplacian of $G$ is $\bm{D}^{-1/2}\bm{L}\bm{D}^{-1/2}$. The measure of graph expansion we use is the spectral gap of the normalized Laplacian, or equivalently its second smallest eigenvalue $\nu_2 = \lambda_2(\bm{D}^{-1/2}\bm{L}\bm{D}^{-1/2})$. We say that $G$ is an expander if $\nu_2 = \widetilde{\Omega}(1).$ By Cheeger's inequality, this is equivalent to G having  $\widetilde{\Omega}(1)$ conductance.

Given a symmetric matrix $\bm{M} \in \rea^{n \times n},$ we will express its spectral decomposition as
$
    \bm{M} = \sum_{i=0}^{n}\lambda_i \bm{\psi_i\psi_i}^\top,
$
where $\{\ppsi_i\}_{i=1,\ldots,n}$ are orthonormal eigenvectors with eigenvalues $\lambda_1 \le \lambda_2 \le \ldots \le \lambda_n$ respectively. We define its pseudoinverse $\bm{M}^+$ to be:
$\bm{M}^+ = \sum_{i: \lambda_i \neq 0}\frac{1}{\lambda_i} \bm{\psi_i\psi_i}^\top.$

In particular, for a connected graph $G$, its Laplacian $\bm{L}$ has kernel spanned exactly by $\bm{1}$, the all $1$ vector. 
We will write $\bm{A} \preceq \bm{B}$ or $\bm{B} \succeq \bm{A}$ iff $\bm{B} - \bm{A}$ is positive semi-definite. 

The effective resistance between two vertices $u$ and $v$ in a graph $G$ is defined to be 
\begin{align*}
    R(u,v) = (\bm{1_u} - \bm{1_v})^\top\bm{L}^+(\bm{1_u} - \bm{1_v}).
\end{align*}
where $\bm{1_u}$ is defined to be the vector that is 1 on the $u$ entry, and 0 everywhere else. 
We will  use the following fact about effective resistances, the proof of which is available in the appendix.
\begin{restatable}[]{fact}{factrefflowerbound}
\label{fact:reff_lowerbound}
    The effective resistance of any edge $(u,v)$ is lower bounded by $\frac{1}{2}(\frac{1}{d_u}+\frac{1}{d_v})$, where $d_u$ and $d_v$ are the degrees of $u$ and $v$ respectively. 
\end{restatable}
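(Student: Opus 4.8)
The plan is to prove the bound through the standard variational (Dirichlet-energy) characterization of effective resistance, which reduces the claim to plugging in two cheap test potentials. Since $\bm{1_u}-\bm{1_v}$ is orthogonal to $\bm{1}$ and hence lies in the range of $\bm{L}$, a Cauchy--Schwarz argument on the spectral decomposition of $\bm{L}$ gives
\[
  R(u,v) = (\bm{1_u}-\bm{1_v})^\top \bm{L}^+ (\bm{1_u}-\bm{1_v}) = \max_{\ff\,:\,\bm{L}\ff \neq \vzero} \frac{\left(\ff_u - \ff_v\right)^2}{\ff^\top \bm{L}\, \ff},
\]
where $\ff^\top \bm{L}\, \ff = \sum_{(a,b) \in E} w_{ab}\left(\ff_a - \ff_b\right)^2$ and $\ff_u$ denotes the $u$-th coordinate of $\ff$ (and the condition $\bm{L}\ff \neq \vzero$ just says $\ff$ is not constant). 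In particular, any single non-constant choice of $\ff$ yields a valid lower bound on $R(u,v)$.

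First I would take $\ff = \bm{1_u}$. The numerator equals $1$ because $\bm{1_u}$ is $1$ at $u$ and $0$ at $v \neq u$, and in the denominator the only edges on which $\bm{1_u}$ is non-constant are those incident to $u$, each contributing $w_{ub}$; hence $\ff^\top \bm{L}\, \ff = \sum_{b \sim u} w_{ub} = d_u$, giving $R(u,v) \ge 1/d_u$. The symmetric choice $\ff = \bm{1_v}$ gives $R(u,v) \ge 1/d_v$. Combining the two, $R(u,v) \ge \max\{1/d_u, 1/d_v\} \ge \frac{1}{2}\left(1/d_u + 1/d_v\right)$, which is the claim. (Equivalently, these are the two instances of the Nash--Williams inequality applied to the edge cutsets consisting of all edges incident to $u$, respectively to $v$.)

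I do not expect any real obstacle here; the only point deserving a sentence of care is the variational identity. For the lower bound we in fact need only the inequality $\left(\ff_u - \ff_v\right)^2 \le R(u,v)\cdot \ff^\top \bm{L}\, \ff$, which is just Cauchy--Schwarz after writing $\ff_u - \ff_v = (\bm{1_u}-\bm{1_v})^\top \ff$ and splitting $\ff$ into its components in $\ker \bm{L}$ and $\mathrm{range}\,\bm{L}$ (the kernel component affects neither side, since $\bm{1_u}-\bm{1_v} \perp \bm{1}$ and $\bm{L}\bm{1} = \vzero$). I would also note that the argument never uses that $(u,v)$ is an edge — it holds for any pair of vertices in a connected graph — and that the edge $(u,v)$, when present, is harmless: with $\ff = \bm{1_u}$ it is simply one of the edges incident to $u$ already accounted for in $d_u$.
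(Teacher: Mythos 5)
Your proof is correct. It takes a different route from the paper's primary argument, which is purely operator-theoretic: the paper observes $\bm{x}^\top(\bm{D}+\bm{A})\bm{x}=\sum_{(a,b)\in E}w_{ab}(\bm{x}_a+\bm{x}_b)^2\ge 0$, hence $\bm{L}=\bm{D}-\bm{A}\preceq 2\bm{D}$, inverts this on $\bm{1}^{\perp}$ to get $\bm{L}^+\succeq\frac12\bm{D}^{-1}$ there, and plugs in $\bm{1_u}-\bm{1_v}$, landing directly on $\frac12(1/d_u+1/d_v)$. Your route — the variational lower bound $R(u,v)\ge(\ff_u-\ff_v)^2/(\ff^\top\bm{L}\ff)$ instantiated at the two test potentials $\ff=\bm{1_u}$ and $\ff=\bm{1_v}$ — yields the slightly stronger $\max\{1/d_u,1/d_v\}$, which you then relax to the average; it is essentially the Dirichlet-form dual of the paper's own \emph{alternative} proof (shorting all edges not incident to $u$ via Rayleigh monotonicity), which arrives at the same $\max$ bound. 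Each approach has its merits: the operator inequality is more self-contained given the paper's linear-algebraic preliminaries and produces the stated average form in one step, while yours is more modular (one test vector per bound), extends verbatim to arbitrary vertex pairs as you note, and isolates the one point genuinely requiring care — the kernel component in the Cauchy--Schwarz justification of the variational identity — which you handle correctly using $\bm{1_u}-\bm{1_v}\perp\bm{1}$ and connectivity.
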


The Schur complement of a symmetric matrix $\bm{M} = \left[ 
\begin{array}{c c} 
  \bm{F} & \bm{B} \\ 
  \bm{B}^\top & \bm{C}
\end{array} 
\right]$ onto some submatrix $\bm{C}$ is defined to be $\bm{Sc}(\bm{M},\bm{C}) = \bm{C} - \bm{B}^\top\bm{F}^{-1}\bm{B}$. In the context of Laplacians, we abuse notation to have $\bm{Sc}(G,V_1)$, $\bm{Sc}(\bm{L},V_1)$, and $\bm{Sc}(\bm{L},\bm{C})$ all refer to the Schur complement of $\bm{L}$ onto the submatrix $\bm{C}$ on the vertex support $V_1$. 
A useful fact about Schur complements is that the Schur complement of a Laplacian is also a Laplacian. When convenient, we may use $Sc(G,V_1)$ to refer to the graph of the Schur complement from $G$ onto $V_1$. 
Another important fact is that  $Sc(G,V_1)$ preserves the effective resistances of the original graph for pairs of vertices in $V_1.$
\begin{fact}
Consider a graph $G$, and some arbitrary subset of vertices $V_1$. For all $u,v \in V_1$, we have that
  $  R^G(u,v) = R^{Sc(G,V_1)}(u,v).$
\end{fact}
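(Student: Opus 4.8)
The plan is to prove the stronger statement that Schur complementation leaves the quadratic form $\bb^{\top}\bm{L}^{+}\bb$ unchanged for every $\bb\in\rea^{V}$ that is supported on $V_1$ and orthogonal to $\bm{1}$; applying this with $\bb=\bm{1_u}-\bm{1_v}$ (which is supported on $\{u,v\}\subseteq V_1$ and orthogonal to $\bm{1}$) then yields $R^{G}(u,v)=R^{Sc(G,V_1)}(u,v)$ directly from the definition of effective resistance. Set $V_2=V\setminus V_1$ (if $V_2=\emptyset$ there is nothing to prove), and block the Laplacian as $\bm{L}=\left[\begin{array}{cc}\bm{F}&\bm{B}\\\bm{B}^{\top}&\bm{C}\end{array}\right]$ with the first block indexed by $V_2$. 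Because $G$ is connected and $V_2$ is a proper nonempty subset, $\bm{F}$ is a strictly diagonally dominant principal submatrix, hence invertible, so $\bm{S}\defeq\bm{Sc}(\bm{L},\bm{C})=\bm{C}-\bm{B}^{\top}\bm{F}^{-1}\bm{B}$ is well defined; since the Schur complement of a Laplacian is again a Laplacian (and the underlying graph $Sc(G,V_1)$ stays connected), the kernel of $\bm{S}$ is exactly the span of the all-ones vector $\bm{1}_{V_1}\in\rea^{V_1}$.

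The core computation is a block solve. Given $\yy\in\rea^{V_1}$ with $\yy\perp\bm{1}_{V_1}$, let $\bb=\left[\begin{array}{c}\bm{0}\\\yy\end{array}\right]\in\rea^{V}$; then $\bb\perp\bm{1}$, so $\bb$ lies in the image of $\bm{L}$ and $\bm{L}\bm{L}^{+}\bb=\bb$. Write $\bm{L}^{+}\bb=\left[\begin{array}{c}\xx_1\\\xx_2\end{array}\right]$; expanding the two block rows of $\bm{L}\left[\begin{array}{c}\xx_1\\\xx_2\end{array}\right]=\bb$ gives $\bm{F}\xx_1+\bm{B}\xx_2=\bm{0}$ and $\bm{B}^{\top}\xx_1+\bm{C}\xx_2=\yy$, and substituting $\xx_1=-\bm{F}^{-1}\bm{B}\xx_2$ into the second equation produces $\bm{S}\xx_2=\yy$. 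Hence
\[
\bb^{\top}\bm{L}^{+}\bb=\bb^{\top}\left[\begin{array}{c}\xx_1\\\xx_2\end{array}\right]=\yy^{\top}\xx_2 .
\]
Since $\bm{S}\xx_2=\yy=\bm{S}\bm{S}^{+}\yy$ and $\ker\bm{S}=\mathrm{span}(\bm{1}_{V_1})$, we get $\xx_2=\bm{S}^{+}\yy+c\,\bm{1}_{V_1}$ for some scalar $c$, and pairing with $\yy\perp\bm{1}_{V_1}$ removes the extra term: $\yy^{\top}\xx_2=\yy^{\top}\bm{S}^{+}\yy$. Thus $\bb^{\top}\bm{L}^{+}\bb=\yy^{\top}\bm{S}^{+}\yy$, and specializing $\yy=\bm{1_u}-\bm{1_v}$ gives $R^{G}(u,v)=(\bm{1_u}-\bm{1_v})^{\top}\bm{S}^{+}(\bm{1_u}-\bm{1_v})=R^{Sc(G,V_1)}(u,v)$.

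The part needing the most care is purely the kernel/pseudoinverse bookkeeping rather than any nontrivial idea: one must check that $\bm{F}$ is genuinely invertible (using connectedness of $G$ and $V_2\subsetneq V$), that $\bb$ lies in the image of $\bm{L}$ so that $\bm{L}\bm{L}^{+}\bb=\bb$ holds exactly, and that $\bm{S}$ has a one-dimensional kernel so that the ambiguity of $\xx_2$ up to a multiple of $\bm{1}_{V_1}$ is precisely what is annihilated when we pair with $\yy$. As a cross-check, one can instead eliminate the vertices of $V_2$ one at a time: each single-vertex elimination is the classical star-mesh transform (the $Y$-$\Delta$ generalization), which visibly preserves effective resistances among the surviving vertices, and composability of Schur complements shows this agrees with eliminating all of $V_2$ at once.
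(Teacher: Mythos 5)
Your proof is correct. Note, however, that the paper does not actually prove this fact: it is stated in the preliminaries as a standard property of Schur complements with no argument attached (only the two facts in the appendix get proofs), so there is no in-paper derivation to compare against. Your block-solve argument --- writing $\bm{L}^{+}\bb$ in block form, eliminating $\xx_1$ to get $\bm{S}\xx_2=\yy$, and then carefully discharging the kernel ambiguity of $\xx_2$ by pairing with $\yy\perp\bm{1}_{V_1}$ --- is the standard proof and is carried out cleanly; the star-mesh cross-check is a nice sanity check but not needed. One small inaccuracy in a side justification: $\bm{F}$ need not be \emph{strictly} diagonally dominant (a vertex of $V_2$ all of whose neighbours also lie in $V_2$ contributes a row that is only weakly dominant). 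Invertibility of $\bm{F}$ instead follows because it is a proper principal submatrix of the Laplacian of a connected graph: for $\xx_1\neq\bm{0}$, the vector equal to $\xx_1$ on $V_2$ and $\bm{0}$ on $V_1$ is not a multiple of $\bm{1}$ (as $V_1\neq\emptyset$), so its quadratic form against $\bm{L}$, which equals $\xx_1^{\top}\bm{F}\xx_1$, is strictly positive. This does not affect the rest of your argument, and the same quadratic-form reasoning also supplies the connectedness of $Sc(G,V_1)$ that you assert parenthetically when identifying $\ker\bm{S}$ with $\mathrm{span}(\bm{1}_{V_1})$.
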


A $(1 + \alpha)$-diagonally dominant matrix $\bm{M}$ is a symmetric matrix $M$ with non-positive off diagonal entries satisfying $\bm{M}_{uu} \ge (1 + \alpha) \sum_{v : v \neq u} |\bm{M}_{uv}|$. Given such a matrix $\bm{M}$ of size $n$, we can complete it into a Laplacian $\bm{L_M}$ by adding only a single column and row. Call the newly added vertex $x$. We have that:
\begin{align*}
    (\bm{L_M})_{ux}= (\bm{L_M})_{xu} &= -\bm{M}_{uu} + \sum_{v:v\neq u} \bm{M}_{uv},\\
    (\bm{L_M})_{xx} &= -\sum_{u}(\bm{L_M})_{xu}.
\end{align*}

We also refer to a subset $V_1$ of a graph $G$ as being $(1 + \alpha)$-diagonally dominant if for each $u \in V_1$, we have:
\begin{align*}
    \sum_{v \sim u, v \notin V_1} w_{uv} \geq \alpha \sum_{v \sim u, v \in V_1} w_{uv}.
\end{align*}

\section{Effective Resistances}

In this section, we demonstrate that for expanders, a sparse data structure exists, and can be efficiently computed, from which we can query the effective resistances between any two vertices in $\widetilde{O}(1)$ time. The main result of this section is a more precise statement of Theorem~\ref{thm:intro-thm1} as follows:
\begin{theorem}\label{thm:main_reff}
    Given a graph $G$ such that the spectral gap of its normalised Laplacian is $\nu_2$, there exists an algorithm \textsc{EffectiveResistanceSketch} that computes vectors $\{\bm{\widetilde{\sigma}_u}\}$, in $O(m + n\varepsilon^{-2}\nu_2^{-3}\log{n}\log^3{\frac{nW}{\varepsilon}})$ total time, such that:
\begin{enumerate}
    \item The vectors $\{\bm{\widetilde{\sigma}_u}\}$ are $O(\nu_2^{-1}\varepsilon^{-1}\log{nW})$-sparse.
    \item For any vertex pair $(u,v)$ \textsc{EffectiveResistanceQuery} can return an estimate of $R(u,v)$ in $\Otil(1)$ time\footnote{The query time is equal to the time required to lookup a constant number of entries. A constant amortized time can be achieved with a good hash function, or $O(\log{n})$ time can be achieved through a sorted data set.} by querying the vectors $\bm{\widetilde{\sigma}_u}$ and $\bm{\widetilde{\sigma}_v}$. 
    This estimate is a $(1+\eps)$-approximation to $R(u,v)$ with high probability.
\end{enumerate}
\end{theorem}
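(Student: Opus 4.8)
The plan is to build the data structure around vectors $\bm{\sigma_u}$ that are "almost" $\DD\LL^+\PPi\bm{1_u}$ but admit a random-walk interpretation, and then argue three things in sequence: (i) a \emph{recovery lemma} saying $\eps$-additive approximations to $\{\bm{\sigma_u}\}$ suffice to recover $(1+O(\eps))$-approximate effective resistances; (ii) a \emph{sparsity lemma} saying that on an expander, $\bm{\sigma_u}$ has only $\Otil(\nu_2^{-1}\eps^{-1}\log nW)$ coordinates exceeding $\eps$, so the truncation to those coordinates is itself an $\eps$-additive approximation of the required sparsity; and (iii) a \emph{sampling lemma} saying that $\Otil(\eps^{-2}\nu_2^{-?})$ random walks of length $\Otil(\nu_2^{-1}\log\frac{nW}{\eps})$ estimate each coordinate to additive $\eps$ w.h.p., with $O(m)$ preprocessing so each walk step costs $O(1)$. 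Combining (i)--(iii) gives both claims: sparsity from (ii), the $(1+\eps)$-query guarantee from (i) applied to the sampled-and-truncated vectors, and the running time from (iii) plus the $O(m)$ preprocessing.

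First I would set up $\bm{\sigma_u}$. Writing $R(u,v) = (\bm{1_u}-\bm{1_v})^\top\LL^+(\bm{1_u}-\bm{1_v})$ and inserting $\DD^{-1}$ appropriately, one wants $\bm{\sigma_u}$ so that $R(u,v)$ is (up to rescaling by degrees, which are $\Theta(1)$-controlled on the expander after normalization) expressible as $\langle \bm{1_u}-\bm{1_v},\, \bm{\sigma_u}-\bm{\sigma_v}\rangle$ or an $\ell_2^2$-type quantity $\norm{\bm{\sigma_u}-\bm{\sigma_v}}$ in a suitable inner product. The natural candidate is $\bm{\sigma_u} = \DD\LL^+\PPi\bm{1_u}$, but as the overview notes its $\ell_1$ norm is too large; the fix is to replace $\LL^+\PPi$ by $\sum_{t=0}^{T}(\II-\DD^{-1}\LL)^t\DD^{-1}$-type truncated walk operators (a lazy random walk transition matrix), choosing $T = \Theta(\nu_2^{-1}\log\frac{nW}{\eps})$ so the tail is below $\eps$ in the relevant norm. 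Then $(\bm{\sigma_u})_w$ is literally (a degree-weighted version of) the expected number of visits to $w$, or the probability of being at $w$ at a random time, for a length-$T$ lazy walk started at $u$ — which is what makes both the sparsity and the sampling arguments go through. I would verify that this truncation changes $R(u,v)$ by at most $O(\eps)\cdot R(u,v)$ using the spectral gap (each non-kernel eigencomponent decays like $(1-\Theta(\nu_2))^T$) together with $R(u,v) = \Omega(1/n)\cdot$(something), or more simply the edge lower bound $R(u,v)\ge \tfrac12(1/d_u+1/d_v)$ from Fact~\ref{fact:reff_lowerbound}, to convert additive tail error into multiplicative error.

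For the recovery lemma (i), the key point flagged in the overview is that $\norm{\bm{1_u}-\bm{1_v}}_1 = 2 = \Theta(\norm{\bm{1_u}-\bm{1_v}}_2)$, so an error bounded by $\eps\norm{\xx}_1^2$ for $\xx=\bm{1_u}-\bm{1_v}$ is just $O(\eps)$ — and on an expander $R(u,v) = \Theta(1/n)\cdot\|\xx\|_2^2$ up to $\polylog$... more carefully, $R(u,v)\in[\Omega(1/n),\,O(\log n)]$ with the lower bound from the degree bound, so an additive-$\eps$ error on the $\bm{\sigma}$-vectors, passed through the bilinear form, yields additive $O(\eps)$ on $R(u,v)$, which is multiplicative $(1+O(\eps))$ after rescaling $\eps$ by the appropriate $\polylog(nW)$ and $\nu_2$ factors to absorb the worst-case ratio; this is where the $\nu_2^{-1}$ and $\log$ factors in the theorem enter. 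I would be careful to state the recovery lemma with the $\ell_1$-type error so it is graph-independent, then specialize.

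The sparsity lemma (ii) is where the expander hypothesis does the real work: a lazy random walk on a graph with normalized spectral gap $\nu_2$ mixes to within $\eps$ (in, say, $\ell_\infty$ relative to the stationary distribution) in $O(\nu_2^{-1}\log\frac{nW}{\eps})$ steps, after which every coordinate of the walk distribution is within $\eps\cdot\pi_w = \eps\cdot d_w/(2m)$ of stationary; summing the walk distribution over $t=0,\dots,T$ and degree-reweighting, the "large" coordinates (those $>\eps$) of $\bm{\sigma_u}$ can only come from the first $O(T)$ steps before mixing, and in $O(T)$ steps the walk can place mass $>\eps/\text{poly}$ on at most $O(T/\eps)=\Otil(\nu_2^{-1}\eps^{-1}\log\frac{nW}{\eps})$ vertices (since the masses sum to $O(T)$). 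This yields the stated $O(\nu_2^{-1}\eps^{-1}\log nW)$ sparsity; truncating the remaining coordinates to $0$ costs additive $O(\eps)$, compatible with (i).

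The sampling lemma (iii) and the runtime are then bookkeeping: preprocess each vertex's weighted adjacency list in $O(m)$ total time into an alias table so one lazy-walk step costs $O(1)$; from each $u$ run $N = \Theta(\eps^{-2}\nu_2^{-2}\log n\,\log^2\frac{nW}{\eps})$ independent walks of length $T=\Otil(\nu_2^{-1}\log\frac{nW}{\eps})$ — total $O(nNT) = O(n\eps^{-2}\nu_2^{-3}\log n\log^3\frac{nW}{\eps})$ as claimed — record the empirical visit-frequency vector, which by Chernoff/Hoeffding (the per-walk contribution to each coordinate is a bounded $[0,T]$-valued or $\{0,1\}$-per-step quantity) estimates each coordinate of $\bm{\sigma_u}$ to additive $\eps$ with probability $1-n^{-c}$; union-bound over $\le n^2$ coordinates. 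Store only the $\Otil(\nu_2^{-1}\eps^{-1}\log nW)$ above-threshold entries in a hash table, giving $\Otil(n\eps^{-1})$ words. The query answers $R(u,v)$ by reading the relevant entries of $\bm{\widetilde\sigma_u},\bm{\widetilde\sigma_v}$ and evaluating the (constant-size) bilinear/$\ell_2^2$ expression — $\Otil(1)$ time.

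I expect the main obstacle to be the recovery lemma (i): making precise the right bilinear/quadratic identity relating $R(u,v)$ to the $\bm{\sigma}$-vectors with an \emph{$\ell_1$-controlled} error term — in particular correctly handling the projection $\PPi$, the degree reweighting $\DD$, and the walk truncation simultaneously — and then tracking how the worst-case ratio between $\norm{\bm{1_u}-\bm{1_v}}_1^2$ and the actual value $R(u,v)$ forces the $\nu_2^{-1}\polylog(nW)$ rescaling of $\eps$. The sparsity and sampling steps are, by contrast, fairly standard expander/concentration arguments once $\bm{\sigma_u}$ is correctly defined as a truncated walk quantity.
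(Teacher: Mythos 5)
Your overall architecture --- truncated lazy-walk vectors $\bm{\sigma_u}$, an $\ell_1$-norm/mixing argument for sparsity, Hoeffding-based sampling of $\Otil(\eps^{-2}\nu_2^{-1})$ walks of length $\Otil(\nu_2^{-1}\log\frac{nW}{\eps})$ with $O(m)$ alias-table preprocessing, and a constant-size query --- is exactly the paper's proof. The one step you get wrong is the recovery lemma, which you yourself flag as the main obstacle: you propose to convert additive error on the $\bm{\sigma}$-vectors into multiplicative error on $R(u,v)$ by ``absorbing the worst-case ratio'' through a $\nu_2^{-1}\polylog(nW)$ rescaling of $\eps$. No such rescaling is needed, and if you actually performed it you would \emph{not} recover the theorem as stated: the sparsity would degrade from $O(\nu_2^{-1}\eps^{-1}\log nW)$ and the $n\eps^{-2}\nu_2^{-3}$ term in the running time would pick up extra $\nu_2^{-2}\polylog$ factors. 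In the paper, the $\nu_2$ and $\log(nW)$ dependences come solely from the $\ell_1$-norm bound on $\bm{\sigma_u}$ and from the walk length and sample count; the recovery step is lossless.

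The resolution is an exact degree-weighted identity. With $\bm{\sigma_u}=\frac{1}{2}\sum_{t\ge 0}\left((\frac{1}{2}\bm{I}+\frac{1}{2}\bm{A}\bm{D}^{-1})^t\bm{1_u}-\bm{\pi}\right)$ one has $\bm{L}^+(\bm{1_u}-\bm{\pi})=\bm{\Pi}\bm{D}^{-1}\bm{\sigma_u}$, and since both the projection $\bm{\Pi}$ and the common offset $(\bm{L}^+\bm{\pi})_{\cdot}$ cancel in the double difference,
\begin{align*}
R(u,v)=(\bm{D}^{-1}\bm{\sigma_u})_u-(\bm{D}^{-1}\bm{\sigma_u})_v-(\bm{D}^{-1}\bm{\sigma_v})_u+(\bm{D}^{-1}\bm{\sigma_v})_v
\end{align*}
holds exactly. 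An $\eps/4$-additive error on each of the four $\bm{\sigma}$-entries therefore perturbs $R(u,v)$ by at most $\frac{\eps}{4}\left(\frac{2}{d_u}+\frac{2}{d_v}\right)=\frac{\eps}{2}\left(\frac{1}{d_u}+\frac{1}{d_v}\right)$, which by Fact~\ref{fact:reff_lowerbound} is at most $\eps\,R(u,v)$ for \emph{every} vertex pair: the $\bm{D}^{-1}$ weighting makes the additive error automatically commensurate with the lower bound on the effective resistance, with no dependence on $\nu_2$, $n$, or $W$. This is the precise sense in which an $\ell_1$-type error guarantee suffices, and it is what allows the expander hypothesis to be spent entirely on sparsity and mixing time rather than on accuracy. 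With that identity in hand, the rest of your outline (truncation of the tail of the series, thresholding entries below $\eps/4$, union-bounding the Hoeffding estimates over all $n^2$ coordinates) goes through as you describe.
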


\subsection{Effective Resistances on Expanders}

We first begin by proving the existence of these vectors:
\begin{lemma}\label{lem:main_existence}
   Suppose $G$ is a graph such that the spectral gap of its normalised Laplacian is $\nu_2$. There exists vectors $\{\bm{\widetilde{\sigma}_u}\}$ such that the following hold:
\begin{enumerate}
    \item The vectors $\{\bm{\widetilde{\sigma}_u}\}$ are $O(\nu_2^{-1}\varepsilon^{-1}\log{nW})$-sparse.
    \item The effective resistance of an edge $(u,v)$ can be $(1 + \varepsilon)$-approximated by the vectors $\bm{\widetilde{\sigma}_u}$ and $\bm{\widetilde{\sigma}_v}$ in $\widetilde{O}(1)$ time. 
\end{enumerate}
\end{lemma}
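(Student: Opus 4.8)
The plan is to take the vectors $\bm{\widetilde{\sigma}_u}$ to be truncated-then-sparsified versions of the degree-weighted pseudoinverse columns $\bm{D}\bm{L}^+\bm{1_u}$, and to exhibit directly a query that reads off $R(u,v)$ from a constant number of their coordinates. The first step is a random-walk form of $\bm{L}^+$: writing $\bm{L}=\bm{D}(\bm{I}-\bm{P})$ with $\bm{P}=\bm{D}^{-1}\bm{A}$ and passing to the lazy walk $\bm{P}'=\tfrac12(\bm{I}+\bm{P})$, whose eigenvalues on $\bm{1}^{\perp}$ lie in $[0,1-\nu_2/2]$, one obtains $\bm{L}^+=\tfrac12\sum_{t\geq 0}\!\left((\bm{P}')^t\bm{D}^{-1}-\tfrac{1}{\bm{1}^\top\bm{D}\bm{1}}\,\bm{J}\right)$, and hence by reversibility $(\bm{D}\bm{L}^+\bm{1_u})(w)=\tfrac12\sum_{t\geq0}\big(\Pr[X_t=w\mid X_0=u]-\pi_w\big)$, where $\bm{J}$ is all-ones, $\bm{\pi}$ the stationary distribution, and $X_t$ the lazy walk. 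Truncating this sum at $T$ steps defines a vector $\bm{\sigma_u}$ and, correspondingly, a symmetric matrix $\widehat{\bm{L}^+}$ with $\bm{\sigma_u}=\bm{D}\,\widehat{\bm{L}^+}\bm{1_u}$; since $1-\nu_2/2\leq e^{-\nu_2/2}$, the discarded tail has spectral norm at most $\nu_2^{-1}e^{-\nu_2 T/2}$, so choosing $T=O(\nu_2^{-1}\log(nW/\varepsilon))$ makes it at most $\tfrac{\varepsilon}{3}R(u,v)$ for every pair, using Fact~\ref{fact:reff_lowerbound} together with $d_u\leq nW$.

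Second, I would observe that $R(u,v)$ is recovered from $\bm{\sigma_u},\bm{\sigma_v}$ by a fixed formula. Because $\widehat{\bm{L}^+}$ is symmetric and $\bm{\sigma_u}(w)/d_w=(\widehat{\bm{L}^+})_{wu}$, we get
\[ \widehat{R}(u,v)\;\defeq\;\frac{\bm{\sigma_u}(u)}{d_u}+\frac{\bm{\sigma_v}(v)}{d_v}-\frac{\bm{\sigma_u}(v)}{d_v}-\frac{\bm{\sigma_v}(u)}{d_u}\;=\;(\bm{1_u}-\bm{1_v})^\top\widehat{\bm{L}^+}(\bm{1_u}-\bm{1_v}), \]
which by the previous paragraph is within $\tfrac{\varepsilon}{3}R(u,v)$ of $R(u,v)$. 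The query is then: look up the four coordinates $\bm{\sigma_u}(u),\bm{\sigma_u}(v),\bm{\sigma_v}(u),\bm{\sigma_v}(v)$ (each vertex's vector stored as a hash table or sorted list) and evaluate the formula, in $\widetilde O(1)$ time.

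Third comes the sparsity, which is the heart of the matter. The full vector $\bm{D}\bm{L}^+\bm{1_u}$ can have $\ell_1$-norm as large as $\Theta(\sqrt n)$, but the truncated one cannot: each term $\Pr[X_t=\cdot\mid X_0=u]-\bm{\pi}$ is a difference of two probability distributions, hence has $\ell_1$-norm at most $2$, so $\|\bm{\sigma_u}\|_1\leq T+1=O(\nu_2^{-1}\log(nW/\varepsilon))$. Therefore $\bm{\sigma_u}$ has only $O(\nu_2^{-1}\varepsilon^{-1}\log(nW/\varepsilon))$ coordinates of magnitude exceeding $\varepsilon/8$; set $\bm{\widetilde{\sigma}_u}$ equal to $\bm{\sigma_u}$ on those and $0$ elsewhere. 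This perturbs each coordinate by at most $\varepsilon/8$, hence changes the formula above by at most $\tfrac{\varepsilon}{4}\!\left(\tfrac{1}{d_u}+\tfrac{1}{d_v}\right)\leq \tfrac{\varepsilon}{2}R(u,v)$ by Fact~\ref{fact:reff_lowerbound}; combined with the truncation error, the value computed from $\bm{\widetilde{\sigma}_u},\bm{\widetilde{\sigma}_v}$ lies in $(1\pm\varepsilon)R(u,v)$, and $\bm{\widetilde{\sigma}_u}$ has the stated sparsity after rescaling $\varepsilon$ by a constant.

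The main obstacle is the combined error analysis: one must bound two genuinely different error sources — the geometric tail from stopping the walk after $T$ steps (controlled by the spectral gap) and the additive $\varepsilon/8$ from deleting small coordinates — and convert their sum into a single multiplicative $(1+\varepsilon)$ guarantee. The mechanism that makes this work, and that I would emphasize, is that the recovery formula for $R(u,v)$ carries exactly the factors $1/d_u,1/d_v$ that Fact~\ref{fact:reff_lowerbound} lower-bounds $R(u,v)$ against, so coordinate-wise additive errors in $\bm{\sigma}$ become multiplicative errors in $R$ "for free"; the edge-weight bound $W$ is a minor point, entering only through $d_u\leq nW$ and hence only logarithmically. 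The same truncated-walk description of $\bm{\sigma_u}$ is what will later yield the construction algorithm (estimate each coordinate by sampling $\widetilde O(\varepsilon^{-2})$ walks of length $T$), but for this existence statement only the structural facts above are needed.
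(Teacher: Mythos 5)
Your proposal is correct and follows essentially the same route as the paper: the lazy-random-walk power series for the pseudoinverse, the same four-term recovery formula weighted by $1/d_u,1/d_v$ so that Fact~\ref{fact:reff_lowerbound} converts coordinate-wise additive error into multiplicative error, the bound of $2$ on the $\ell_1$ norm of each walk term giving $\norm{\bm{\sigma_u}}_1 = O(T)$, and thresholding at $\Theta(\varepsilon)$ to obtain sparsity. The one imprecision is the asserted identity $\bm{L}^+ = \tfrac12\sum_{t\ge0}\left((\bm{P}')^t\bm{D}^{-1}-\tfrac{1}{\bm{1}^\top\bm{D}\bm{1}}\bm{J}\right)$, which fails for irregular graphs (the right-hand side $\bm{G}$ satisfies $\bm{L}\bm{G}=\bm{I}-\bm{\pi}\bm{1}^\top$ rather than $\bm{\Pi}$, and $\bm{G}\bm{1}\neq\bm{0}$); however, $\bm{G}$ and $\bm{L}^+$ differ only by rank-one terms involving $\bm{1}$, which vanish in the quadratic form on $\bm{1_u}-\bm{1_v}$, so your recovery formula — identical to the paper's — is still exact, and the paper avoids the issue by carrying the projection $\bm{\Pi}$ and the $\bm{\pi}$-offsets explicitly in Lemmas~\ref{lem:power_expansion} and~\ref{lem:sigma_recover_approx}.
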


For some intuition as to why we consider the vectors that we do, we first consider the set of vectors $\bm{D}\bm{L}^+ \bm{1_u}, u \in V$. It is easy to see that we can query $(1 + \varepsilon)$-approximations to the effective resistances from $\varepsilon$-additive approximations to these vectors, given the lower bound in Fact~\ref{fact:reff_lowerbound}.

If each of these vectors were to have small $\ell_1$ norm, on the order of $\Otil(\nu_2^{-1})$, then we would be done, since we could just round off these vectors to the nearest $\varepsilon$ to obtain the sparse vectors.  However, it is not in general true for expanders that these vectors have small enough $\ell_1$ norm, and in fact, for constant expanders, the $\ell_1$ norm of $\bm{DL}^+\bm{1_u}$ can scale linearly with $n$. 

Instead, we consider another set of vectors, constructed through writing $\bm{L}^+$ as a power series expansion. We first begin with the power series expansion on $\bm{L}^+$:
\begin{lemma}\label{lem:power_expansion}
    Given a graph $G$, with Laplacian $\bm{L} = \bm{D} - \bm{A}$, projection matrix $\bm{\Pi} = \bm{I} - \frac{1}{n}\bm{11^\top}$ and $\bm{\pi} = \frac{\bm{D1}}{\bm{1}^\top\bm{D1}}$, the stationary distribution of the random walk on $G$, we have:
\begin{align*}
\bm{L}^+\left(\bm{1_u} - \bm{\pi}\right) &= \frac{1}{2}\bm{\Pi D}^{-1}\sum_{t = 0}^\infty \left((\frac{1}{2}\bm{I} + \frac{1}{2}\bm{AD}^{-1})^t \bm{1_u} - \bm{\pi}\right).
\end{align*}
\end{lemma}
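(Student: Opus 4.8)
\textbf{Proof plan for Lemma~\ref{lem:power_expansion}.}

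The plan is to establish the identity by manipulating the pseudoinverse as a Neumann-type series for the lazy random walk. First I would set $\bm{W} = \frac{1}{2}\bm{I} + \frac{1}{2}\bm{A}\bm{D}^{-1}$, the (column-stochastic) lazy random walk matrix, and observe that $\bm{I} - \bm{W} = \frac{1}{2}(\bm{I} - \bm{A}\bm{D}^{-1}) = \frac{1}{2}\bm{L}\bm{D}^{-1}$. Since $G$ is connected, the kernel of $\bm{L}\bm{D}^{-1}$ is spanned by $\bm{\pi}$ (as $\bm{L}\bm{D}^{-1}(\bm{D}\bm{1}) = \bm{L}\bm{1} = \bm{0}$), and laziness guarantees that all other eigenvalues of $\bm{W}$ lie strictly inside $(-1,1)$, in fact in $[0,1)$; so $\sum_{t\ge 0}\bm{W}^t$ converges on the complement of the stationary direction. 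The core computation is then to verify that $\frac{1}{2}\bm{\Pi}\bm{D}^{-1}\sum_{t\ge 0}\bm{W}^t$ acts as $\bm{L}^+$ on vectors of the form $\bm{1_u} - \bm{\pi}$, which span the orthogonal complement of $\bm{1}$.

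The key steps, in order: (1) Show $\bm{L}^+(\bm{1_u} - \bm{\pi})$ is well-defined, i.e.\ $\bm{1_u} - \bm{\pi} \perp \ker \bm{L} = \mathrm{span}(\bm{1})$, which holds since $\bm{1}^\top(\bm{1_u}-\bm{\pi}) = 1 - 1 = 0$. (2) Write $\bm{L}^+ = (\bm{L}\bm{D}^{-1})^{+}_{\mathrm{restricted}}\,\bm{D}^{-1}$ more carefully: since $\bm{L} = (\bm{L}\bm{D}^{-1})\bm{D}$, one has $\bm{L}^+ = \bm{P}\,(\bm{L}\bm{D}^{-1})^{+}\,\bm{P}$ composed appropriately with $\bm{D}^{-1}$, where $\bm{P}$ projects onto the relevant complements; the cleanest route is to check the defining equation $\bm{L}\bm{x} = \bm{1_u}-\bm{\pi}$ directly. (3) Set $\bm{x} = \frac{1}{2}\bm{\Pi}\bm{D}^{-1}\sum_{t=0}^{\infty}(\bm{W}^t\bm{1_u} - \bm{\pi})$ and compute $\bm{L}\bm{x}$. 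Using $\bm{L}\bm{\Pi} = \bm{L}$ (since $\bm{L}\bm{1}=\bm{0}$) and $\bm{L}\bm{D}^{-1} = 2(\bm{I}-\bm{W})$, the telescoping $2(\bm{I}-\bm{W})\sum_{t=0}^{N}\bm{W}^t = 2(\bm{I} - \bm{W}^{N+1})$ collapses the series; as $N\to\infty$, $\bm{W}^{N+1}\bm{1_u} \to \bm{\pi}$ since $\bm{\pi}$ is the stationary distribution of the (ergodic, aperiodic) lazy walk, giving $\bm{L}\bm{x} = \bm{1_u}-\bm{\pi}$. (4) Confirm $\bm{x}\perp\bm{1}$ so that $\bm{x} = \bm{L}^+(\bm{1_u}-\bm{\pi})$: the prefactor $\bm{\Pi}$ ensures $\bm{1}^\top\bm{x} = 0$ outright. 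Care is needed that $\sum_t (\bm{W}^t\bm{1_u} - \bm{\pi})$ converges coordinatewise (it does, geometrically, once projected off $\bm{\pi}$ — writing $\bm{1_u} - \bm{\pi}$ in the non-stationary eigenspace where $\|\bm{W}\| < 1$), and that pulling $\bm{\Pi}\bm{D}^{-1}$ through the sum is legitimate.

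The main obstacle I expect is the bookkeeping around the non-self-adjointness of $\bm{W} = \frac12\bm{I} + \frac12\bm{A}\bm{D}^{-1}$: it is only self-adjoint in the $\bm{D}^{-1}$-inner product, not the standard one, so statements about its spectrum, its convergence, and "$\bm{W}^t\bm{1_u}\to\bm{\pi}$" should be justified by conjugating to the symmetric matrix $\bm{D}^{-1/2}\bm{W}\bm{D}^{1/2} = \frac12\bm{I} + \frac12\bm{D}^{-1/2}\bm{A}\bm{D}^{-1/2} = \bm{I} - \frac12\bm{D}^{-1/2}\bm{L}\bm{D}^{-1/2}$, whose eigenvalues are $1 - \frac12\nu_i \in [0,1)$ for $i\ge 2$ and $1$ for $i=1$. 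Everything else is a direct telescoping-series verification; the substantive content is simply that the lazy walk operator has no eigenvalue $-1$ and that the all-ones/stationary directions are handled consistently by the $\bm{\Pi}$ and $\bm{\pi}$ terms.
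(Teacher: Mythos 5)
Your proposal is correct and follows essentially the same route as the paper: telescope the Neumann series $(\bm{I}-\bm{W})\sum_{t\le N}\bm{W}^t = \bm{I}-\bm{W}^{N+1}$, use mixing of the lazy walk to kill the tail on $\bm{1_u}-\bm{\pi}$, and then identify the resulting solution of $\bm{L}\bm{x}=\bm{1_u}-\bm{\pi}$ with $\bm{L}^+(\bm{1_u}-\bm{\pi})$ via the projection $\bm{\Pi}$ and uniqueness of the solution orthogonal to $\bm{1}$. Your extra care about justifying convergence by conjugating $\bm{W}$ to the symmetric matrix $\bm{I}-\frac{1}{2}\bm{D}^{-1/2}\bm{L}\bm{D}^{-1/2}$ is exactly how the paper handles it too (in the deferred proof of its random-walk convergence lemma).
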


\begin{proof}
Let $\bm{X} = \frac{1}{2}\bm{I} + \frac{1}{2}\bm{AD}^{-1}$. We have:
\begin{align*}
    \frac{1}{2}(\bm{D} - \bm{A})\bm{D}^{-1}\sum_{t = 0}^n (\frac{1}{2}\bm{I} + \frac{1}{2}\bm{AD}^{-1})^t &= (\bm{I} - \bm{X})\sum_{t = 0}^n\bm{X}^t
    = \bm{I} - \bm{X}^{n+1}.
\end{align*}

Consider the action of this matrix on $\bm{1_u} - \bm{\pi}$. As $n \rightarrow \infty$, since $\bm{X}$ is the lazy random walk matrix, repeated calls of it on $\bm{1_u} - \bm{\pi}$, a vector with entrywise sum 0, sends it to $\bm{0}$. As such, we have:
\begin{align*}
    \frac{1}{2}(\bm{D} - \bm{A})\bm{D}^{-1}\sum_{t = 0}^\infty (\frac{1}{2}\bm{I} + \frac{1}{2}\bm{AD}^{-1})^t(\bm{1_u} - \bm{\pi}) &= (\bm{I} - \bm{X})\sum_{t = 0}^\infty\bm{X}^t(\bm{1_u} - \bm{\pi})
    = \bm{1_u} - \bm{\pi}.
\end{align*}
Thus, $\bm{x} = \bm{D}^{-1}\sum_{t = 0}^\infty (\frac{1}{2}\bm{I} + \frac{1}{2}\bm{AD}^{-1})^t(\bm{1_u} - \bm{\pi})$ is a solution to the equation $(\frac{1}{2}\bm{L})\bm{x} = \bm{1_u} - \bm{\pi}$. We know that $(\frac{1}{2}\bm{L})^+(\bm{1_u} - \bm{\pi})$ is the unique solution to this equation that is perpendicular to $\bm{1}$, so we have:
\begin{align*}
    2\bm{L}^+(\bm{1_u} - \bm{\pi}) = (\frac{1}{2}\bm{L})^+(\bm{1_u} - \bm{\pi})
    &= \bm{\Pi D}^{-1}\sum_{t = 0}^\infty (\frac{1}{2}\bm{I} + \frac{1}{2}\bm{AD}^{-1})^t(\bm{1_u} - \bm{\pi})\\
    \bm{L}^+(\bm{1_u} - \bm{\pi}) &= \frac{1}{2}\bm{\Pi D}^{-1}\sum_{t = 0}^\infty \left((\frac{1}{2}\bm{I} + \frac{1}{2}\bm{AD}^{-1})^t \bm{1_u} - \bm{\pi}\right).
\end{align*}
as desired. 
\end{proof}

Instead of considering $\bm{DL}^+\bm{1_u}$, we instead consider $\bm{\sigma_u} = \frac{1}{2}\sum_{t = 0}^\infty \left((\frac{1}{2}\bm{I} + \frac{1}{2}\bm{AD}^{-1})^t \bm{1_u} - \bm{\pi}\right)$. This is similar in spirit to $\bm{DL}^+\bm{1_u}$, since $\bm{DL}^+(\bm{1_u} - \bm{\pi}) = \bm{D\Pi D}^{-1} \bm{\sigma_u}$. 

However, compared to $\bm{DL}^+\bm{1_u}$, it can be shown that $\bm{\sigma_u}$ each have small $\ell_1$ norm, and can be used to recover $(1 + \varepsilon)$-approximations to the effective resistances. We first begin with a by showing that $\varepsilon$-additive approximations to this set of vectors suffices. 
\begin{lemma}\label{lem:sigma_recover_approx}
    Given an $\frac{1}{4}\varepsilon$-additive approximation to the vectors $\{\bm{\sigma_u} = \frac{1}{2}\sum_{t = 0}^\infty \left((\frac{1}{2}\bm{I} + \frac{1}{2}\bm{AD}^{-1})^t \bm{1_u} - \bm{\pi}\right) | u \in V\}$, we can obtain $(1 + \varepsilon)$-approximations to the effective resistances of the graph $G$ corresponding to the Laplacian $\bm{L}$. 
\end{lemma}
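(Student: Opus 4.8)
The plan is to reduce the statement to the already-established Fact~\ref{fact:reff_lowerbound} together with the identity $\bm{DL}^+(\bm{1_u} - \bm{\pi}) = \bm{D\Pi D}^{-1}\bm{\sigma_u}$ coming from Lemma~\ref{lem:power_expansion}. First I would write the effective resistance of a pair $(u,v)$ in terms of the $\bm{\sigma}$-vectors. Since $\bm{L}^+\bm{1} = \bm{0}$, we have $R(u,v) = (\bm{1_u}-\bm{1_v})^\top \bm{L}^+(\bm{1_u}-\bm{1_v}) = (\bm{1_u}-\bm{1_v})^\top \bm{L}^+\big((\bm{1_u}-\bm{\pi}) - (\bm{1_v}-\bm{\pi})\big)$, and substituting the power-series expression from Lemma~\ref{lem:power_expansion},
\begin{align*}
R(u,v) &= (\bm{1_u}-\bm{1_v})^\top \bm{D}^{-1}\big(\bm{\sigma_u} - \bm{\sigma_v}\big)
= \tfrac{1}{d_u}\big(\bm{\sigma_u}-\bm{\sigma_v}\big)_u - \tfrac{1}{d_v}\big(\bm{\sigma_u}-\bm{\sigma_v}\big)_v,
\end{align*}
using that $\bm{\Pi}^\top(\bm{1_u}-\bm{1_v}) = \bm{1_u}-\bm{1_v}$ and that $(\bm{1_u}-\bm{1_v})^\top\bm{D}^{-1}$ picks out exactly the $u$ and $v$ coordinates scaled by $1/d_u$ and $1/d_v$. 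Hence $R(u,v)$ depends only on four coordinates of the $\bm{\sigma}$-vectors, namely $(\bm{\sigma_u})_u,(\bm{\sigma_u})_v,(\bm{\sigma_v})_u,(\bm{\sigma_v})_v$, which is what makes the $\widetilde{O}(1)$ query time possible; the natural estimator is to substitute the approximations $\bm{\widetilde{\sigma}_u},\bm{\widetilde{\sigma}_v}$ into the same formula.

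Next I would bound the error. If $\bm{\widetilde{\sigma}_u}$ is an $\tfrac14\varepsilon$-additive approximation to $\bm{\sigma_u}$ (meaning each coordinate is off by at most $\tfrac14\varepsilon$, or in $\ell_\infty$), then each of the four terms above is perturbed by at most $\tfrac{1}{d_u}\cdot\tfrac14\varepsilon$ or $\tfrac{1}{d_v}\cdot\tfrac14\varepsilon$, so the total additive error in the estimate of $R(u,v)$ is at most $\tfrac14\varepsilon\big(\tfrac{2}{d_u}+\tfrac{2}{d_v}\big) = \tfrac{\varepsilon}{2}\big(\tfrac{1}{d_u}+\tfrac{1}{d_v}\big)$. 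By Fact~\ref{fact:reff_lowerbound}, $R(u,v) \ge \tfrac12\big(\tfrac{1}{d_u}+\tfrac{1}{d_v}\big)$, so the additive error is at most $\varepsilon\, R(u,v)$; this turns the additive guarantee into the desired multiplicative $(1+\varepsilon)$-approximation. I should double check the constant: the statement asks for a $\tfrac14\varepsilon$-additive approximation and I would like to land at error $\le \varepsilon R(u,v)$, and indeed $4 \cdot \tfrac14 = 1$ works with a small slack, so the constant $\tfrac14$ is comfortable (even $\tfrac12$ would do).

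The main obstacle — really the only place care is needed — is pinning down exactly what "$\tfrac14\varepsilon$-additive approximation" means for these vectors and making sure the formula for $R(u,v)$ is evaluated consistently. In particular one must be careful that the stated lemma is about edges $(u,v)$ where Fact~\ref{fact:reff_lowerbound} applies; for a general vertex pair one would need the corresponding lower bound $R(u,v)\ge \tfrac12(\tfrac1{d_u}+\tfrac1{d_v})$ as well, which still holds by the same Rayleigh-monotonicity / short-circuit argument (contracting everything but $u$ and $v$). I would also note that $R(u,v)$ can legitimately be computed from the formula above by an algorithm that only reads $O(1)$ entries of the stored sparse vectors, so a hash-table or sorted-list lookup gives the claimed query time; since $\bm{\sigma_u}$ has no reason to be supported on coordinate $u$, a coordinate not present in the sparse representation is simply read as $0$, which is consistent with it being within $\tfrac14\varepsilon$ of the true (small) value. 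Assembling these pieces — the identity, the four-coordinate formula, the $\ell_\infty$ error propagation, and Fact~\ref{fact:reff_lowerbound} — completes the proof.
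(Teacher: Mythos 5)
Your proposal is correct and follows essentially the same route as the paper: both use the identity $\bm{L}^+(\bm{1_u}-\bm{\pi}) = \bm{\Pi D}^{-1}\bm{\sigma_u}$ from Lemma~\ref{lem:power_expansion} to express $R(u,v)$ as the same four-coordinate combination of $\bm{\sigma_u},\bm{\sigma_v}$, then propagate the $\tfrac14\varepsilon$ per-coordinate error to get additive error $\tfrac{\varepsilon}{2}(\tfrac{1}{d_u}+\tfrac{1}{d_v})$ and invoke Fact~\ref{fact:reff_lowerbound} to convert it to a multiplicative $(1+\varepsilon)$ guarantee. Your remarks about the lower bound holding for arbitrary vertex pairs and about missing sparse entries being read as $0$ are sound and, if anything, slightly more careful than the paper's write-up.
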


\begin{proof}
Suppose we have $\varepsilon$-additive approximations $\{\bm{\widetilde{\sigma}_u}\}$ to $\{\bm{\sigma_u}\}$. For a given edge $(u,v)$, we have that:
\begin{align*}
     R(u,v) &= (\bm{1_u} - \bm{1_v})^\top\bm{L}^+(\bm{1_u} - \bm{1_v})\\
    &= (\bm{L}^+\bm{1_u})_u - (\bm{L}^+\bm{1_v})_u - (\bm{L}^+\bm{1_u})_v + (\bm{L}^+\bm{1_v})_v.
\end{align*}

Since $\bm{L}^+(\bm{1_u} - \bm{\pi}) = \bm{\Pi D}^{-1}\bm{\sigma_u}$ from Lemma~\ref{lem:power_expansion}
and since the projection to the all $1$s vector does not change the difference between any two elements, we have:
\begin{align*}
    (\bm{L}^+(\bm{1_u} - \bm{\pi}))_u - (\bm{L}^+(\bm{1_u} - \bm{\pi}))_v = (\bm{\Pi D}^{-1}\bm{\sigma_u})_u - (\bm{\Pi D}^{-1}\bm{\sigma_u})_v = (\bm{D}^{-1}\bm{\sigma_u})_u - (\bm{D}^{-1}\bm{\sigma_u})_v.
\end{align*}
Next, since they are both offset by the same value $(\bm{L}^+\bm{\pi})_u$, we have:
\begin{align*}
    (\bm{L}^+\bm{1_u})_u - (\bm{L}^+\bm{1_v})_u = (\bm{L}^+(\bm{1_u} - \bm{\pi}))_u - (\bm{L}^+(\bm{1_v} - \bm{\pi}))_u,
\end{align*}
which gives us:
\begin{align*}
    R(u,v) &= (\bm{L}^+\bm{1_u})_u - (\bm{L}^+\bm{1_u})_v - (\bm{L}^+\bm{1_v})_u + (\bm{L}^+\bm{1_v})_v\\
    &= (\bm{L}^+(\bm{1_u} - \bm{\pi}))_u - (\bm{L}^+(\bm{1_u} - \bm{\pi}))_v - (\bm{L}^+(\bm{1_v} - \bm{\pi}))_u + (\bm{L}^+(\bm{1_v} - \bm{\pi}))_v\\
    &= (\bm{D}^{-1}\bm{\sigma_u})_u - (\bm{D}^{-1}\bm{\sigma_u})_v - (\bm{D}^{-1}\bm{\sigma_v})_u + (\bm{D}^{-1}\bm{\sigma_v})_v.
\end{align*}
Hence, we can approximate the effective resistance with:
\begin{align*}
     \frac{1}{d_u}(\bm{\widetilde{\sigma}_u})_u - \frac{1}{d_v}(\bm{\widetilde{\sigma}_u})_v + \frac{1}{d_v}(\bm{\widetilde{\sigma}_v})_v - \frac{1}{d_u}(\bm{\widetilde{\sigma}_v})_u.
\end{align*}
The total error is upper bounded by $\frac{1}{2}(\frac{1}{d_u} + \frac{1}{d_v})\varepsilon$. Since by Fact~\ref{fact:reff_lowerbound}, the effective resistance $R(u,v)$ is lower bounded by $\frac{1}{2}(\frac{1}{d_u}+\frac{1}{d_v})$, this gives us a $(1 + \varepsilon)$-approximation to the effective resistances.
\end{proof}

Next, we show that these vectors do indeed have small $\ell_1$ norm. 
\begin{lemma}\label{lem:sigma_sparse}
    Given a graph $G$ so that the spectral gap of its normalised Laplacian is $\nu_2$, the vectors $\{\bm{\sigma_u} = \frac{1}{2}\sum_{t = 0}^\infty \left((\frac{1}{2}\bm{I} + \frac{1}{2}\bm{AD}^{-1})^t \bm{1_u} - \bm{\pi}\right) | u \in V\}$ each have $\ell_1$ norm bounded by $O(\nu_2^{-1}\log{nW})$.
\end{lemma}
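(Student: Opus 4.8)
The plan is to bound $\norm{\bm{\sigma_u}}_1 = \frac{1}{2}\sum_{t=0}^\infty \norm{\bm{X}^t \bm{1_u} - \bm{\pi}}_1$, where $\bm{X} = \frac{1}{2}\bm{I} + \frac{1}{2}\bm{AD}^{-1}$ is the lazy random-walk transition matrix. The key observation is that $\bm{X}^t\bm{1_u}$ is exactly the distribution of the lazy random walk after $t$ steps started at $u$ (it is a stochastic matrix acting on the point mass $\bm{1_u}$), and $\bm{\pi}$ is its stationary distribution, so $\norm{\bm{X}^t\bm{1_u} - \bm{\pi}}_1$ is (twice) the total-variation distance to stationarity at time $t$. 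I would split the sum into two ranges: a ``burn-in'' range $t \le T$ and a ``mixed'' range $t > T$, where $T = \Theta(\nu_2^{-1}\log(nW))$ is the mixing time.

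For the mixed range, I would invoke the standard spectral mixing bound for lazy walks. The lazy walk matrix $\bm{X}$ is similar to the symmetric matrix $\bm{I} - \frac{1}{2}\bm{D}^{-1/2}\bm{L}\bm{D}^{-1/2}$ via conjugation by $\bm{D}^{1/2}$, so its nontrivial eigenvalues are $1 - \frac{1}{2}\nu_i \in [1/2, 1 - \frac{1}{2}\nu_2]$; in particular the second-largest eigenvalue modulus is $1 - \frac{\nu_2}{2}$. A standard calculation (pass to the $\bm{D}^{\pm 1/2}$-weighted $\ell_2$ norm, apply the spectral bound, then convert back to $\ell_1$ via Cauchy–Schwarz, paying a $\sqrt{n}$ and a factor depending on $\min_v \pi_v \ge 1/(nW\cdot\mathrm{something})$, i.e. on the ratio of max to min degree) gives $\norm{\bm{X}^t\bm{1_u} - \bm{\pi}}_1 \le \poly(nW)\,(1 - \nu_2/2)^t$. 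Choosing $T = \Theta(\nu_2^{-1}\log(nW))$ makes this at most, say, $1/(nW)$ for $t = T$, and then the geometric tail $\sum_{t > T}(1-\nu_2/2)^{t-T}\cdot(1-\nu_2/2)^T$ sums to $O(\nu_2^{-1})\cdot\frac{1}{nW} = O(\nu_2^{-1})$. For the burn-in range, each term is trivially bounded: $\norm{\bm{X}^t\bm{1_u}}_1 = 1$ and $\norm{\bm{\pi}}_1 = 1$, so $\norm{\bm{X}^t\bm{1_u} - \bm{\pi}}_1 \le 2$, and summing over $t \le T$ contributes $O(T) = O(\nu_2^{-1}\log(nW))$. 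Adding the two ranges and halving gives $\norm{\bm{\sigma_u}}_1 = O(\nu_2^{-1}\log(nW))$, as claimed.

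The main obstacle I anticipate is the $\ell_2$-to-$\ell_1$ conversion and tracking the right polynomial factor in $nW$: the spectral bound naturally lives in the $\bm{\pi}$-weighted $\ell_2$ norm, and converting to $\ell_1$ costs a factor like $\sqrt{n/\min_v \pi_v}$, where $\min_v \pi_v = \min_v d_v / (\sum_w d_w) \ge 1/(nW)$ after accounting for the weight bound $W$ (and the assumption that weighted degrees are at least $1$, or whatever normalization the paper uses). One must make sure this polynomial blowup only affects the logarithmic factor (since it sits inside $\log$ after choosing $T$), not the leading $\nu_2^{-1}$. A secondary point to be careful about: the burn-in contribution is already $O(\nu_2^{-1}\log nW)$, so the final bound's logarithmic factor comes entirely from the burn-in length, and the tail contributes only the clean $O(\nu_2^{-1})$ term — so the statement's $O(\nu_2^{-1}\log nW)$ is tight with this argument and no further optimization of the tail is needed.
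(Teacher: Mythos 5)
Your proposal is correct and follows essentially the same route as the paper: split the series at $t_0 = \Theta(\nu_2^{-1}\log nW)$, bound each burn-in term trivially by $2$ (contributing the dominant $O(\nu_2^{-1}\log nW)$), and control the tail with the standard spectral mixing bound $\norm{\bm{X}^t\bm{1_u} - \bm{\pi}}_1 \le e^{-t\nu_2/2}\,\mathrm{poly}(nW)$ obtained by passing to the degree-weighted $\ell_2$ norm and converting back via Cauchy--Schwarz, exactly as in the paper's Lemma~\ref{lem:randomwalk}. Your observations about where the $\mathrm{poly}(nW)$ factor is absorbed and about the burn-in dominating the final bound also match the paper's accounting.
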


\begin{proof}
First, notice that this summation converges. $\frac{1}{2}\bm{I} + \frac{1}{2}\bm{AD}^{-1}$ is the lazy random walk matrix. As such, when applied to $\bm{1_u}$, a vector with sum 1, this term goes to $\bm{\pi}$, the stationary distribution. We explicitly bound the tail terms of this summation, truncating it at $t_0$. We have:
\begin{align*}
    \sum_{t = 0}^\infty \left((\frac{1}{2}\bm{I} + \frac{1}{2}\bm{AD}^{-1})^t \bm{1_u} - \bm{\pi}\right) &= \sum_{t = 0}^{t_0-1} \left((\frac{1}{2}\bm{I} + \frac{1}{2}\bm{AD}^{-1})^t \bm{1_u} - \bm{\pi}\right) + \sum_{t = t_0}^\infty \left((\frac{1}{2}\bm{I} + \frac{1}{2}\bm{AD}^{-1})^t \bm{1_u} - \bm{\pi}\right).
\end{align*}
We first bound the total $\ell_1$ norm of the second term. We begin with a well known result on the convergence of random walks, the proof of which is available in the appendix.
\begin{restatable}[]{lemma}{lemrandomwalk}
\label{lem:randomwalk}
    Given a graph $G$, and an initial distribution $\bm{p}$, the $\ell_1$ norm difference between the distribution of the lazy random walk after $t$ steps and the stationary distribution is bounded by:
    \begin{align*}
        \norm{(\frac{1}{2}\bm{I} + \frac{1}{2}\bm{AD}^{-1})^{t}\bm{p} - \bm{\pi}}_1 \le e^{-t\nu_2/2} \frac{nd_{\max}}{d_{\min}}.
    \end{align*}
where $d_{max}$ and $d_{min}$ are the largest and smallest weighted degrees respectively. 
\end{restatable}

This bounds the $\ell_1$ norm of the second term:
\begin{align*}
    \norm{\sum_{t = t_0}^\infty \left((\frac{1}{2}\bm{I} + \frac{1}{2}\bm{AD}^{-1})^t \bm{1_u} - \bm{\pi}\right)}_1 &\leq \sum_{t = t_0}^\infty \norm{\left((\frac{1}{2}\bm{I} + \frac{1}{2}\bm{AD}^{-1})^t \bm{1_u} - \bm{\pi}\right)}_1\\
    &\leq \frac{2nd_{max}}{d_{min}} \sum_{t = t_0}^\infty e^{-t\nu_2/2} \\ 
    &=  \frac{2nd_{max}}{d_{min}} \frac{e^{-\frac{\nu_2 t_0}{2}}}{1 - e^{-\frac{\nu_2}{2}}}.
\end{align*}

We have that $\frac{d_{max}}{d_{min}} \leq W$, and $(1 - e^{-\frac{\nu_2}{2}}) \geq \nu_2/4$. We bound $\nu_2$ as follows. The conductance of any graph is minimally $\frac{1}{n^2W}$. Cheeger's inequality then tells us that $\nu_2 \geq (\frac{1}{n^2W})^2/2$. Hence, setting $t_0 = O(\nu_2^{-1}\log{nW})$ bounds the error by a constant. The first term can also easily be bounded by $2t_0$:
\begin{align*}
\norm{\sum_{t = 0}^{t_0-1} \left((\frac{1}{2}\bm{I} + \frac{1}{2}\bm{AD}^{-1})^t \bm{1_u} - \bm{\pi}\right)}_1 
&\leq \sum_{t = 0}^{t_0-1} \norm{(\frac{1}{2}\bm{I} + \frac{1}{2}\bm{AD}^{-1})^t\bm{1_u}}_1 + \norm{\bm{\pi}}_1\\
&= 2t_0,
\end{align*}
since the random walk matrix preserves $\ell_1$ norms for strictly positive vectors. Combining bounds the $\ell_1$ norm of $\sum_{t = 0}^\infty \left( (\frac{1}{2}\bm{I} + \frac{1}{2}\bm{AD}^{-1})^t \bm{1_u} - \bm{\pi}\right)$ by $O(\nu_2^{-1}\log{nW})$ as desired. 
\end{proof}

We now combine Lemma~\ref{lem:sigma_recover_approx} and Lemma~\ref{lem:sigma_sparse} to prove Lemma~\ref{lem:main_existence}:

\begin{proof}[Proof of Lemma~\ref{lem:main_existence}]

We have from Lemma~\ref{lem:sigma_sparse} that the $\ell_1$ norm of $\bm{\sigma_u} = \sum_{t = t_0}^\infty \left( (\frac{1}{2}\bm{I} + \frac{1}{2}\bm{AD}^{-1})^t \bm{1_u} - \bm{\pi}\right)$ is bounded by $O(\nu_2^{-1}\log{nW})$. As such, there can be at most $O(\nu_2^{-1}\varepsilon^{-1}\log{nW})$ entries in this vector that have absolute value greater than $\varepsilon/4$. Consider the $\varepsilon/4$-additive approximation to this vector $\bm{\widetilde{\sigma}_u}$ where each entry that has absolute value less than $\varepsilon/4$ is 0. This vector is $O(\nu^{-1}_2\varepsilon^{-1}\log{nW})$ sparse. 

By Lemma~\ref{lem:sigma_recover_approx}, we can use these $\varepsilon/4$-additive approximations to the vectors $\bm{\sigma_u}$ to obtain $(1 + \varepsilon)$-approximations to the effective resistances in $\widetilde{O}(1)$ time for each query. 
\end{proof}

Finally, we demonstrate that these vectors can be efficiently calculated, proving Theorem~\ref{thm:main_reff}. From the above discussion, we've established that to obtain $(1 + \epsilon)$-approximations to the effective resistances, we simply have to produce the vectors $\{\bm{\widetilde{\sigma}_u}\}_{u \in V}$. We can view the algorithm as follows. The random walk starting at any position mixes quickly, and quickly there is no significant difference between its probability distribution and the stationary distribution. Most coordinates approach the stationary distribution quickly enough that even summing across the first $\Otil(\nu_2^{-1})$ steps their contributions are small and can be discarded, while the remaining $\Otil(\epsilon^{-1})$ coordinates have significant sums from which we can extract the effective resistances. 

\begin{proof}[Proof of Theorem~\ref{thm:main_reff}]
From our proof of Lemma~\ref{lem:sigma_sparse}, we have that the power series, truncated at $t_0 = O(\nu^{-1}_2\log{\frac{nW}{\varepsilon}})$ gives us at most $\varepsilon/8$ additive error in each coordinate. As such, we only have to produce a $\varepsilon/8$-approximation of:
\begin{align*}
    \sum_{t = 0}^{t_0-1} \left( (\frac{1}{2}\bm{I} + \frac{1}{2}\bm{AD}^{-1})^t \bm{1_u} - \bm{\pi}\right).
\end{align*}
To do so, we simply perform lazy random walks. We calculate an $\varepsilon/8$-additive approximation to $\left(\sum_{t = 0}^{t_0 - 1} (\frac{1}{2}\bm{I} + \frac{1}{2}\bm{AD}^{-1})^t \bm{1_u}\right)_v$ for all $v \in V$ by performing $s = O(\varepsilon^{-2}t_0\log{n})$ random walks for each length $l$ from $0$ to $t_0$ starting at $u$. Fix some starting vertex $u$, and some end vertex $v$, and let $X^{uv}_{li}$ be the random variable that the $i$-th random walk of length $l$ starting at $u$ ends at $v$. Let $S_{uv} = \sum_{i = 1}^s\sum_{l = 0}^{t_0 - 1}X^{uv}_{li}$. 

\begin{algorithm} \label{alg:det}
\caption{\textsc{EffectiveResistanceSketch}($G,\varepsilon,\nu_2$)}\label{alg:expandereffectiveresistance}
\begin{algorithmic}
\State \textbf{Input:} Graph $G$, with spectral gap of its normalised Laplacian being $\nu_2$
\State \textbf{Output} $O(\nu_2^{-1}\epsilon^{-1}\log{nW})$-sparse vectors $\{\bm{\sigma_u}\}$ from which the effective resistances can be queried in $O(1)$ time. 
\State $\bm{S_{\frac{1}{n}1}}$ $\gets$ $\bm{0}$
\For{$u \in V$}
    \State $\bm{S_u}$ $\gets$ $\bm{0}$
    \For{$i = 1\;to\;s = O(\varepsilon^{-2}\nu^{-1}_2\log{n}\log{\frac{nW}{\varepsilon}})$}
        \For{$l = 0\;to\;t_0 - 1 = O(\nu^{-1}_2\log{\frac{nW}{\varepsilon}})$}
            \State Perform a length $l$ random walk starting from $u$, ending at vertex $v$.
            \State $(\bm{S_u})_v$ $\gets$ $(\bm{S_u})_v + \frac{1}{s}$
        \EndFor
    \EndFor
\EndFor
\For{$u \in V$}
    \State $\bm{\widetilde{\sigma}_u}$ $\gets$ $\bm{S_u} - t_0\bm{\pi}_u$
    \State For each entry of $\bm{\widetilde{\sigma}_u}$, if it smaller than $\varepsilon$ in absolute value, set it to $0$. 
    \EndFor
    \State \Return $\{\bm{\widetilde{\sigma}_u}\}_{u \in V}$ as sparse vectors
\end{algorithmic}
\end{algorithm}

\begin{algorithm} \label{alg:query}
\caption{\textsc{EffectiveResistanceQuery}($\{\bm{\widetilde{\sigma}_u}\},(x,y)$)}\label{alg:effectiveresistancequery}
\begin{algorithmic}
\State \textbf{Input:} The output of \textsc{EffectiveResistanceSketch}, $\{\bm{\widetilde{\sigma}_u}\}$, and two vertices to be queried $(x,y)$
\State \textbf{Output} $(1 + \varepsilon)$-approximation of the effective resistance between $x$ and $y$
\State \Return $\frac{1}{d_u}(\bm{\widetilde{\sigma}_u})_u - \frac{1}{d_v}(\bm{\widetilde{\sigma}_u})_v + \frac{1}{d_v}(\bm{\widetilde{\sigma}_v})_v - \frac{1}{d_u}(\bm{\widetilde{\sigma}_v})_u$.
\end{algorithmic}
\end{algorithm}

By a Hoeffding bound, we have that:
\begin{align*}
\Pr(|S_{uv} - \mathbb{E}[S_{uv}]| \geq  s\varepsilon/8 ) \leq 2e^{-\frac{2(s\varepsilon/8)^2}{st_0}}.
\end{align*}
This gives us that our approximation to $\left(\sum_{t = 0}^{t_0 - 1} (\frac{1}{2}\bm{I} + \frac{1}{2}\bm{AD}^{-1})^t \bm{1_u}\right)_v$, namely $\frac{1}{s}S_{uv}$ is an $\varepsilon/8$-additive approximation with high probability. Now union bounding over all the possible vertices gives us high probability guarantees on all the errors being $\varepsilon/8$-additive approximations. 

For each vertex, there are a total of $s = O(\varepsilon^{-2}t_0\log{n})$ random walks being performed for each length $l$ from $0$ to $t_0$. We note here that there is a procedure \textsc{UnsortedProportionalSampling}\cite{Walker77,BringmannP16} with $O(m)$ total preprocessing time, from which we can query random walk edges in $O(1)$ time each. A simpler approach using a balanced binary search tree would also suffice, but with $O(\log{n})$ overhead. This efficient sampling method allows us to run these random walks in total time:
\begin{align*}
    O(m + nst^2_0) = O(m + n\varepsilon^{-2}\nu_2^{-3}\log{n}\log^3{\frac{nW}{\varepsilon}}).
\end{align*}

Now, subtracting $t_0\bm{\pi}$ from the vector yields a $\varepsilon/8$-additive approximation to $\sum_{t = 0}^{t_0 - 1} \left( (\frac{1}{2}\bm{I} + \frac{1}{2}\bm{AD}^{-1})^t \bm{1_u} - \bm{\pi}\right))$ as desired.

The full algorithm to sketch these vectors can be seen in Algorithm~\ref{alg:expandereffectiveresistance}, and the query algorithm in Algorithm~\ref{alg:effectiveresistancequery}.
\end{proof}

\subsection{Effective Resistances on $(1+\alpha)$-DD Matrices}

We prove an analogous statement for $(1+\alpha)$-DD Matrices.

Given a $2$-DD matrix $\bm{M}$, we look at its completion into a Laplacian $\bm{L_M}$ representing graph $G_M$ with new added vertex $x$. On this new graph $G_M$, we show that the effect resistances of every edge not involving $x$ can be calculated.
\begin{lemma}\label{lem:2ddreff}
   Suppose $\bm{M}$ is a $(1 + \alpha)$-DD matrix, and let $\bm{L_M}$ represent its completion into a Laplacian, and $G_M$ be the represented graph with new vertex $x$. There is an algorithm that builds a data-structure in $O(m + n\varepsilon^{-2}(1 + \alpha^{-3})\log{n}\log^3{\frac{nW}{\varepsilon}})$ time, where where $m$ is the number of edges in $G_M$. 
   This data-structure allows us to query for to the effective resistances between any two original vertices  in $G_M$ (not x) in $\widetilde{O}(1)$ time, and with high probability, it returns 
    a $(1 + \varepsilon)$-approximation.
\end{lemma}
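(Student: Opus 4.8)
The plan is to reduce the claim to Theorem~\ref{thm:main_reff} by passing to the Schur complement that eliminates the added vertex $x$. Let $H$ be the Schur complement of $G_M$ onto $V$, i.e.\ $\bm{L_H} = \bm{Sc}(\bm{L_M}, V)$, the graph obtained by eliminating $x$. Since $x$ is joined to every vertex of $V$, eliminating it leaves on $V$ the original edges of $G_M$ together with a complete ``fill-in'' clique; in particular $H$ is in general dense, and every pair $u,v\in V$ is an edge of $H$. By the Schur-complement fact from the preliminaries, $R^{G_M}(u,v) = R^{H}(u,v)$ for all $u,v\in V$, so it suffices to build an effective-resistance sketch for $H$. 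Two things must be checked before invoking Theorem~\ref{thm:main_reff}: (i) $H$ has good spectral expansion as a function of $\alpha$, and (ii) although $H$ cannot be written down, the (lazy) random walk used by the algorithm of Theorem~\ref{thm:main_reff} can still be simulated in $O(1)$ time per step.

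For (i), write $b_u$ for the weight of the edge $(u,x)$ in $G_M$; the $(1+\alpha)$-DD hypothesis gives $b_u = \bm M_{uu} - \sum_{v\neq u}|\bm M_{uv}| \ge \tfrac{\alpha}{1+\alpha}\bm M_{uu}$, and $\bm M_{uu}$ is exactly the weighted degree of $u$ in $G_M$. The chain we will actually simulate is the \emph{censored} walk on $G_M$ (described below), whose underlying graph $H'$ satisfies $\bm{L_{H'}} = \bm{L_H}$ and has degree vector $\bm d \defeq (\bm M_{uu})_{u\in V}$ -- it is $H$ together with some self-loops, which change neither the Laplacian nor any effective resistance. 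I claim the spectral gap of the normalised Laplacian of $H'$ is at least $\tfrac{\alpha}{1+\alpha}$. Using the energy characterisation of the Schur complement, for any $\bm z\in\rea^V$,
\begin{align*}
\bm z^\top \bm{L_{H'}}\, \bm z
&= \min_{c\in\rea}\Bigl( \sum_{(u,v):\,\bm M_{uv}\neq 0} |\bm M_{uv}|(z_u-z_v)^2 + \sum_{u} b_u (z_u-c)^2 \Bigr) \\
&\ge \min_{c\in\rea}\sum_u b_u(z_u-c)^2 \ \ge\ \tfrac{\alpha}{1+\alpha}\min_{c\in\rea}\sum_u \bm M_{uu}(z_u-c)^2,
\end{align*}
and for $\bm z$ orthogonal to $\bm d$ the last minimum is attained at $c=0$ and equals $\sum_u \bm M_{uu} z_u^2$, giving a Rayleigh quotient at least $\tfrac{\alpha}{1+\alpha}$. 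Hence we may run the algorithm of Theorem~\ref{thm:main_reff} with $\nu_2 = \tfrac{\alpha}{1+\alpha}$, for which $\nu_2^{-1} = O(1+\alpha^{-1})$ and $\nu_2^{-3} = O(1+\alpha^{-3})$.

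For (ii), the censored walk on $G_M$ is: from $u\in V$ take one step of the ordinary random walk on $G_M$, and if it lands on $x$ take one further step back into $V$. This is a reversible chain on $V$ with stationary measure proportional to $\bm d$, and a direct computation shows its transition matrix is that of the random walk on $H'$ (the self-loop weight $b_u^2/d_x$ at $u$ is produced by the $u\to x\to u$ transitions, which is exactly why $H'$ differs from $H$ only by self-loops); note the clique fill-in is never materialised -- it is represented implicitly by the ``hop through $x$''. Each censored step costs $O(1)$ after an $O(m)$ preprocessing that builds, for each vertex of $G_M$, the constant-time neighbour sampler of \textsc{UnsortedProportionalSampling}. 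Running \textsc{EffectiveResistanceSketch} on $H'$ -- simulating all its lazy walks this way, with $\nu_2 = \tfrac{\alpha}{1+\alpha}$ -- therefore takes time $O(m + n\eps^{-2}\nu_2^{-3}\log n\log^3\tfrac{nW}{\eps}) = O(m + n\eps^{-2}(1+\alpha^{-3})\log n\log^3\tfrac{nW}{\eps})$, and \textsc{EffectiveResistanceQuery} on the resulting vectors returns, for $u,v\in V$, an estimate of $R^{H'}(u,v) = R^{G_M}(u,v)$ in $\widetilde O(1)$ time. For the approximation guarantee we need the lower bound of Fact~\ref{fact:reff_lowerbound} with the degrees $\bm M_{uu}$: since $(u,v)$ is an edge of $H$ and its $H$-degrees satisfy $d^H_u \le \bm M_{uu}$, we get $R^{H'}(u,v) = R^{H}(u,v) \ge \tfrac12\bigl(\tfrac1{d^H_u}+\tfrac1{d^H_v}\bigr) \ge \tfrac12\bigl(\tfrac1{\bm M_{uu}}+\tfrac1{\bm M_{vv}}\bigr)$, which dominates the additive error $\tfrac12\bigl(\tfrac1{\bm M_{uu}}+\tfrac1{\bm M_{vv}}\bigr)\eps$ produced by the sketch, exactly as in Lemma~\ref{lem:sigma_recover_approx}.

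The main obstacle is that we never get to touch $H$ itself, so essentially all of the work is the bookkeeping needed to transfer Theorem~\ref{thm:main_reff} to an object we can access only implicitly: one must check that the censored chain $H'$ carries (a) the same Laplacian, hence the same effective resistances, as $H$; (b) a normalised-Laplacian spectral gap still $\Omega(\alpha/(1+\alpha))$ despite the added self-loops; and (c) a degree vector for which Fact~\ref{fact:reff_lowerbound} still gives the resistance lower bound, so that $\eps$-additive sketch errors become $(1+\eps)$-multiplicative errors. A secondary, routine point is that the $\log\tfrac{nW}{\eps}$ factor survives -- the relevant weight ratio for $H'$ is $\max_u \bm M_{uu}/\min_u \bm M_{uu}$, and since each $\bm M_{uu}$ is a weighted degree of $G_M$ it lies between $\sum_{v\neq u}|\bm M_{uv}|$ and $nW$, so this ratio is $\poly(n,W)$ and the truncation length $t_0 = O(\nu_2^{-1}\log\tfrac{nW}{\eps})$ is as in the proof of Lemma~\ref{lem:sigma_sparse}.
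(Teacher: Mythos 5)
Your proposal is correct and follows essentially the same route as the paper: eliminate $x$ via the Schur complement, show the resulting (implicit) graph has normalised spectral gap $\Omega(\alpha/(1+\alpha))$, and run \textsc{EffectiveResistanceSketch} by simulating the dense walk implicitly in $O(1)$ per step. The only differences are cosmetic --- you obtain the spectral gap from the energy characterisation of the Schur complement rather than from the explicit spectrum of the fill-in clique, and you simulate the walk as a ``hop through $x$'' (adding harmless self-loops) rather than by the paper's two-case sampling between original edges and the clique; both yield the same bound.
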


To calculate the effective resistances between vertices in $\bm{M}$, we can calculate the effective resistances in the Schur complement of $G_M$ onto $V\backslash\{x\}$. Since we are removing a single vertex $x$, we are adding a weighted clique back onto the graph. We first show that the Schur complement produced is a $\nu_2$ expander. 

Let $G_{M}{[V\backslash x]}$ denote the subgraph of $G_M$ restricted to the vertices $V\backslash \{x\}$. Let the degrees of each vertex $u$ in $G_{M}{[V\backslash x]}$ be $d_u$, and the weight of the edge $(u,x)$ be $d'_u$. Since $\bm{M}$ is an $(1+\alpha)$-DD matrix, we have that $d'_u \geq \alpha d_u$. Let $d_x = \sum_u d'_u$ be the total degree of the new vertex $x$. It is well known\footnote{A proof is available in the preliminaries section of \cite{KyngS16}} that the weighted clique added by the Schur complement has weights $\frac{d'_ud'_v}{d_x}$ for the edge $(u,v)$. 

Let $\bm{D'}$ be the diagonal matrix with $\bm{D'}_{uu} = d'_u$. Consider the Laplacian on just the weighted clique, normalised by the degrees $\bm{D'}$, $\bm{D'}^{-1/2}(\bm{Sc}(G_M,V\backslash\{x\}) - \bm{L_{G_M[V\backslash \{x\}]}})\bm{D'}^{-1/2}$. We have that if $\lambda$ is an eigenvalue of this matrix, with eigenvector $\bm{\psi}$, that for each $u$:
\begin{align*}
    \frac{1}{\sqrt{d'_u}}d'_u\frac{1}{\sqrt{d'_u}}\bm{\psi}_u - \frac{1}{\sqrt{d'_u}}\sum_v \frac{d'_ud'_v}{d_x}\frac{1}{\sqrt{d'_v}}\bm{\psi}_v = \lambda \bm{\psi}_u,\\
    -\sum_v \frac{d'_v}{d_x}\frac{1}{\sqrt{d'_v}}\bm{\psi}_v = (\lambda - 1)\frac{1}{\sqrt{d'_u}} \bm{\psi}_u.
\end{align*}

In particular, we have that $(\lambda - 1)\frac{1}{\sqrt{d'_u}} \bm{\psi}_u = (\lambda - 1)\frac{1}{\sqrt{d'_v}} \bm{\psi}_v$ for any two $u,v$. This gives us an eigenvector $\bm{D'}^{1/2}\bm{1}$, with eigenvalue $0$, and any other vectors perpendicular to $\bm{D'}^{1/2}\bm{1}$ having eigenvalue $1$. As such, we have:
\begin{align*}
    \bm{D'}^{-1/2}(\bm{Sc}(G_M,V\backslash\{x\}) - \bm{L_{G_M[V\backslash \{x\}]}})\bm{D'}^{-1/2} \succeq \bm{I} - \frac{1}{\bm{1}^\top\bm{D'}\bm{1}}\bm{D'}^{1/2}\bm{1}(\bm{D'}^{1/2}\bm{1})^\top.
\end{align*}
Let $\bm{D}$ be the diagonal with the degrees in $\bm{Sc}(G_M,V\backslash\{x\})$, $\bm{D}_u = d_u + d'_u - \frac{(d'_u)^2}{d_x}$. Since $d'_u \geq \alpha d_u$, we have that each diagonal entry of $\bm{D'}\bm{D}^{-1}$ is lower bounded by $d'_u/(d_u + d'_u) \ge \alpha/(1 + \alpha)$. We now have:
\begin{align*}
    (\bm{Sc}(G_M,V\backslash\{x\}) - \bm{L_{G_M[V\backslash \{x\}]}}) &\succeq \bm{D'} - \frac{1}{\bm{1}^\top\bm{D'}\bm{1}}\bm{D'}\bm{1}(\bm{D'}\bm{1})^\top,\\
    \bm{D}^{-1/2}(\bm{Sc}(G_M,V\backslash\{x\}) - \bm{L_{G_M[V\backslash \{x\}]}})\bm{D}^{-1/2}
    &\succeq \bm{D'}\bm{D}^{-1} - \bm{D}^{-1/2}\frac{1}{\bm{1}^\top\bm{D'}\bm{1}}\bm{D'}\bm{1}(\bm{D'}\bm{1})^\top\bm{D}^{-1/2}.
\end{align*}

In particular, over any subspace perpendicular to $\bm{D}^{-1/2}\bm{D'}\bm{1}$, the eigenvalue of the weighted clique is at least $\alpha/(1 + \alpha)$. As such, since the Laplacian of $G_M[V\backslash \{x\}]$ is positive semi definite, the second smallest normalised eigenvalue of the Schur complement, $\lambda_2(\bm{D}^{-1/2}(\bm{Sc}(G_M,V\backslash\{x\}))\bm{D}^{-1/2})$ is at least $\alpha/(1 + \alpha)$. 

We now apply the same technique as in Theorem~\ref{thm:main_reff}, additionally noting that since the Schur complement adds a clique to the original edges, we cannot explicitly write down the whole graph to perform random walks. We modify the random walk process by performing the random walks implicitly. 

As a preprocessing step, we first calculate the total weights of each of the edges going to $x$. Let $d''_u =  d'_u - \frac{(d'_u)^2}{d_x}$ be the part of the degree contributed by the Schur complement. We note that $d_u + d''_u$ is the new degree of the vertex $u$. While performing the random walk, at any vertex $u$, with probability $\frac{d_u}{d_u + d''_u }$, we perform a random walk step on the $O(m)$-sparse original graph $G_M[V\backslash\{x\}]$ using \textsc{UnsortedProportionalSampling}. With the remaining probability $\frac{d''_u}{d_u + d''_u}$ we perform a random walk through the weighted clique added by the Schur complement. This is easily achieved by sampling an outgoing edge proportional to the degrees $d'_v$. 

Now, since Schur complements preserve effective resistances, this gives us the effective resistance of all edges not involving $x$. 

\section{Approximate Determinants on Expanders}

Being able to more efficiently $\varepsilon$-approximate the effective resistances allows us to more efficiently approximate spanning tree counts on expanders. We prove the following more precise version of Theorem~\ref{thm:intro-thm2} in this section:
\begin{theorem}\label{thm:main_det}
    Given a graph $G$ such that the spectral gap of its normalised Laplacian is $\nu_2$, we can calculate a $(1 + \delta)$-approximation to the number of spanning trees of $G$ in $\Otil(m + n^{1.5}\delta^{-1}\nu_2^{-3})$ time with high probability.
\end{theorem}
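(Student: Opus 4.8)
\textbf{Proof proposal for Theorem~\ref{thm:main_det}.}
The plan is to follow the determinant-sparsifier framework of \textcite{DurfeePPR17}, but to replace their effective-resistance subroutine with the faster one from Theorem~\ref{thm:main_reff}, while arguing that the recursion can be arranged so that every graph on which we need to estimate a determinant stays an expander. Recall that the \cite{DurfeePPR17} approach writes the spanning-tree count (equivalently the pseudodeterminant of $\bm L$) as a telescoping product over a sequence of graphs obtained by repeatedly splitting the vertex set and taking Schur complements: for a balanced partition $V = V_0 \sqcup V_1$, one has $\det{}^+(\bm L) = \det{}^+(\bm{Sc}(\bm L,V_0)) \cdot \det(\bm L[V_1]) \cdot (\text{correction})$, and each piece is handled either recursively or by a determinant sparsifier, which in turn requires $(1+\eps)$-approximate effective resistances of $s = \Otil(n^{1.5}\delta^{-1})$ edges with $\eps = \Theta(n^{-0.25}\delta^{-0.5})$.

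First I would isolate, as a black box, the exact interface \cite{DurfeePPR17} needs from the effective-resistance estimator: approximate $R(u,v)$ for a prescribed list of $s$ pairs, to multiplicative accuracy $1+\eps$, with $\eps$ as above. Plugging Theorem~\ref{thm:main_reff} into this interface costs $\Otil(m + n\eps^{-2}\nu_2^{-3} + s) = \Otil(m + n \cdot n^{0.5}\delta \cdot \nu_2^{-3} + n^{1.5}\delta^{-1}) = \Otil(m + n^{1.5}\delta^{-1}\nu_2^{-3})$ per level of recursion, since $\eps^{-2} = \Theta(n^{0.5}\delta)$ so $n\eps^{-2} \le n^{1.5}\delta^{-1}$ and the $s$ term dominates anyway. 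With $O(\log n)$ levels of recursion this only changes the bound by a $\polylog$ factor, absorbed in $\Otil(\cdot)$. So the running time is immediate \emph{once} we know the estimator applies at every node of the recursion tree.

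The main obstacle — and the only real content beyond bookkeeping — is exactly that last point: the Schur complements $\bm{Sc}(G,V_0)$ and the induced subgraphs $G[V_1]$ produced by the recursion are \emph{not} automatically expanders even if $G$ is. I would handle this by modifying the partitioning step of \cite{DurfeePPR17} so that the split $V = V_0 \sqcup V_1$ is chosen to be (approximately) balanced \emph{and} such that $V_1$ is $(1+\alpha)$-diagonally dominant inside $G$ for a constant $\alpha$ (equivalently, each $u \in V_1$ sends an $\Omega(1)$ fraction of its weight across the cut); the existence of such a balanced "DD-core" split in an expander follows from the expansion hypothesis ($\nu_2 = \widetilde\Omega(1)$ forces every not-too-large set to have large boundary, so a greedy/threshold peeling or a standard expander-decomposition-style argument produces the desired $V_1$). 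Then: (i) $\bm L[V_1]$ is $(1+\alpha)$-DD, so by the argument already given in the proof of Lemma~\ref{lem:2ddreff} its Laplacian completion, after eliminating the added vertex, is an expander with spectral gap $\ge \alpha/(1+\alpha) = \Omega(1)$, and we estimate its determinant via the DD-matrix machinery (implicit random walks through the added clique); (ii) for $\bm{Sc}(G,V_0)$ one uses the standard fact that Schur-complementing onto a subset only \emph{increases} effective conductance / cannot hurt expansion by more than a controlled factor when $V_0$ is chosen appropriately, so it too remains an $\widetilde\Omega(1)$-expander and we recurse. I would also verify that the accuracy budget $\delta$ splits across the $O(\log n)$ recursion levels as $\delta/\polylog(n)$ per level, which is what \cite{DurfeePPR17} already does and which only contributes $\polylog$ factors.

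Finally I would assemble the pieces: the correctness of the telescoping determinant identity and the error accumulation are inherited verbatim from \cite{DurfeePPR17}; the novelty is the invariant "every graph in the recursion tree has $\widetilde\Omega(1)$ spectral gap," maintained by the modified split, which licenses the use of Theorem~\ref{thm:main_reff} (and its DD-matrix analogue, Lemma~\ref{lem:2ddreff}) throughout; and the runtime is then the per-level cost $\Otil(m + n^{1.5}\delta^{-1}\nu_2^{-3})$ times $O(\log n)$, i.e. $\Otil(m + n^{1.5}\delta^{-1}\nu_2^{-3})$, as claimed. The step I expect to be genuinely delicate is proving that a \emph{balanced} split with the DD property exists and can be found in $\Otil(m)$ time while simultaneously keeping $\bm{Sc}(G,V_0)$ an expander — balancing these three demands against each other is where the expansion hypothesis has to be used most carefully.
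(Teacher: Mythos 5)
Your overall route is the paper's route: run the \cite{DurfeePPR17} recursion, swap in the new effective-resistance estimator, and maintain the invariant that every graph in the recursion tree is an expander, handling the two branches via the DD-matrix machinery of Lemma~\ref{lem:2ddreff} and via the fact that Schur complements preserve expansion. The per-level cost accounting is also correct. However, there is one genuine gap: the invariant you need is not that $\bm{Sc}(G,V_0)$ is an expander, but that the \emph{determinant sparsifier} of $\bm{Sc}(G,V_0)$ is, because the Schur complement may be dense and the recursion actually operates on the $\Otil(n^{1.5}\delta^{-1})$-edge sparsifier, never on the Schur complement itself. Your proposal establishes expansion for the Schur complement (hedged as ``when $V_0$ is chosen appropriately''; in fact the paper's Lemma~\ref{lem:schurincreasesnu2} shows unconditionally that Schur complementation can only increase the normalized spectral gap) but says nothing about whether the leverage-score sampling step destroys it. The paper closes this with Lemma~\ref{lem:detsparsifypreservesnu2}: the determinant sparsifier is, after rescaling, a $\delta^{1/4}$-spectral sparsifier, so its normalized spectral gap drops by at most a $(1-O(\delta^{1/4}))$ factor per level, which over $O(\log n)$ levels keeps the gap at $\Otil(\nu_2)$. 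Without some such argument the claim that Theorem~\ref{thm:main_reff} applies at every node of the recursion tree is unsupported.

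Separately, the step you flag as the delicate one --- constructing a \emph{balanced} DD split using the expansion hypothesis --- is not where the work lies and is arguably a misdirection. The paper simply invokes Lemma 3.5 of \cite{KyngLPSS15}, which finds a $(1+\alpha)$-DD subset containing a constant fraction of the vertices in \emph{any} graph, with no expansion assumption and no balance requirement beyond that constant fraction (the paper even tolerates the DD branch shrinking the problem by only one vertex at a time in its modified recursion, at the cost of doubling the depth). The expansion hypothesis is used only to license the effective-resistance estimator, not to find the split. So the effort you budget for the partitioning step should instead go to the sparsifier-preserves-expansion lemma above.
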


\begin{proof}

We follow the strategy as in~\cite{DurfeePPR17}. We first begin with a description of the overall strategy as in~\cite{DurfeePPR17}.

We begin with a graph $G$. The determinant of a Laplacian is always $0$, since it has a kernel $\bm{1}$. For the sake of convenience, we abuse notation and refer to $\text{det}_+(\bm{L_G})$, the determinant of the Laplacian $\bm{L_G}$ with one of its rows and columns removed when talking about the determinant of a Laplacian. To approximate the determinant of $\bm{L_G}$, we find a $(1+\alpha)$-DD subset of vertices $V_2$ using Lemma 3.5 of \cite{KyngLPSS15}, and $V_1 = V\backslash V_2$ and recursively calculate the following:
\begin{align*}
    \text{det}_+(\bm{L_G}) = \text{det}(\bm{L_{[V_2,V_2]}})\cdot\text{det}_+(\bm{Sc}(\bm{L},V_1)).
\end{align*}

This decomposes into two parts, the determinant of a $(1+\alpha)$-DD matrix, and $\text{det}_+(\bm{Sc}(\bm{L},V_1))$. The former of these two terms is the submatrix of a Laplacian, and we can calculate it recursively by adding a new row and column, completing it into a Laplacian. 

The latter term is the determinant of a Schur complement. The Schur complement is also a Laplacian, so ideally we would like to simply recurse on this half as well. However, taking Schur complements can result in the number of edges blowing up to $\Theta(n^2)$, so an explicit construction of the Schur complement would not be fast enough. The authors of~\cite{DurfeePPR17} get around this by implicitly constructing a determinant sparsifier of the Schur complement. 

In particular, they demonstrate that a $(1 + \delta)$-determinant sparsifier of a graph can be constructed by sampling some $s$ edges of $G$, proportional to their leverage scores with $\varepsilon$ multiplicative error, and then reweighting each edge by a factor of $\exp{(\frac{n^2}{2(n-1)s})}$. For the guarantees to hold, $s$ and $\varepsilon$ are picked so that they satisfy $\frac{n^2\varepsilon^2}{s}, \frac{n^3}{s^2}\leq \delta^2$. For the best time complexity, we pick $s = n^{1.5}\delta^{-1}, \varepsilon = n^{-0.25}\delta^{0.5}$.

In fact, the bottleneck in this algorithm is precisely the time required to sample these $n^{1.5}\varepsilon^{-1}$ $(1+\varepsilon)$-approximate effective resistances, for $\varepsilon = n^{-0.25}\delta^{-0.5}$. \textcite{ChuGPSSW18} rewrites the algorithm in~\cite{DurfeePPR17}, showing the following lemma:
\begin{lemma}[\cite{DurfeePPR17, ChuGPSSW18}]\label{lem:betterschursparse}
    Let $T(m,n,s,\varepsilon)$ be the time required to find $s$ $(1 + \varepsilon)$-approximations to the effective resistances of some query edges in a graph $G$ with $m$ edges and $n$ vertices. There is an algorithm $\textsc{BetterSchurSparse}$ that takes a $2$-DD subset of vertices $V_2$, and constructs a $(1+ \delta)$-determinant sparsifier of $\bm{Sc}(G,V_1)$ with $\Otil(n^{1.5}\delta^{-1})$ edges in time:
\begin{align*}
    \Otil(m + T(m,n^{1.5}\delta^{-1},n^{-0.25}\delta^{-0.5})),
\end{align*}
and that in fact the determinant approximation algorithm \textsc{DetApprox} has the same time complexity. 

\end{lemma}

We adapt their algorithm to our effective resistance sampler that only works on $(1+\alpha)$-DD matrices or expanders. For our approach to go through, we require that taking the Schur complement preserves expansion so that our recursion holds. We first begin by proving this lemma.
\begin{lemma}\label{lem:schurincreasesnu2}
Given a graph $G$ with Laplacian $\bm{L}$ such that the second smallest eigenvalue of its normalised laplacian is $\nu_2$, for any set $C$ of vertices, the Schur complement $\bm{S} = \bm{Sc}(G,C)$ of $G$ from $F = V\backslash C$ onto $C$, normalised to the original degrees, also has second smallest eigenvalue at least $\nu_2$. 
\end{lemma}

\begin{proof}

We adopt the strategy as in the proof of the eigenvalue interlacing theorem. Let $\bm{L} = {\footnotesize
\left[ 
\begin{array}{cc} 
  \bm{F} & \bm{B} \\ 
  \bm{B^\top} & \bm{C} 
\end{array} 
\right]}$, and $\bm{D_F}^{-1/2}$ and $\bm{D_C}^{-1/2}$ be the degrees on $F$ and $C$ respectively. We have:
{\footnotesize
\begin{align}
&\left[ 
\begin{array}{cc} 
  \bm{D_F}^{-1/2} & \bm{0} \\ 
  \bm{0} & \bm{D_C}^{-1/2} 
\end{array} 
\right]
\left[ 
\begin{array}{cc} 
  \bm{F} & \bm{B} \\ 
  \bm{B}^\top & \bm{C} 
\end{array} 
\right]
\left[ 
\begin{array}{cc} 
  \bm{D_F}^{-1/2} & \bm{0} \\ 
  \bm{0} & \bm{D_C}^{-1/2} 
\end{array} 
\right]\\
=&
\left[ 
\begin{array}{cc} 
\bm{D_F}^{-1/2} & \bm{0}\\ 
\bm{0} & \bm{D_C}^{-1/2}
\end{array} 
\right]
\left[ 
\begin{array}{cc} 
\bm{I} & \bm{0}\\ 
\bm{B}^\top\bm{F}^{-1} & \bm{I}
\end{array} 
\right]
\left[ 
\begin{array}{cc} 
  \bm{F} & \bm{0} \\ 
  \bm{0} & \bm{S} 
\end{array} 
\right]
\left[ 
\begin{array}{cc} 
  \bm{I} & \bm{F}^{-1}\bm{B} \\ 
  \bm{0} & \bm{I}
\end{array} 
\right]
\left[ 
\begin{array}{cc} 
  \bm{D_F}^{-1/2} & \bm{0} \\ 
  \bm{0} & \bm{D_C}^{-1/2} 
\end{array} 
\right].
\end{align}}
Consider some vector $\bm{v} = \left[ 
{\footnotesize \begin{array}{c} 
  \bm{x} \\ 
  \bm{y}
\end{array}}
\right]$.

We have that:
{\footnotesize
\begin{align*}
\left[ 
\begin{array}{c c} 
  \bm{I} & \bm{F}^{-1}\bm{B} \\ 
  \bm{0} & \bm{I}
\end{array} 
\right]
\left[ 
\begin{array}{c c} 
  \bm{D_F}^{-1/2} & \bm{0} \\ 
  \bm{0} & \bm{D_C}^{-1/2} 
\end{array} 
\right]
\left[ 
\begin{array}{c} 
  \bm{x}\\
  \bm{y}
\end{array} 
\right]
= 
\left[ 
\begin{array}{c} 
  \bm{D_F}^{-1/2}\bm{x} + \bm{F}^{-1}\bm{B}\bm{D_C}^{-1/2}\bm{y}\\
  \bm{D_C}^{-1/2}\bm{y}
\end{array} 
\right].
\end{align*}}

Now for any $\bm{y}$, consider its extension into the whole space $v(\bm{y}) = \left[ 
{\footnotesize \begin{array}{c} 
  -\bm{D_F}^{1/2}\bm{F}^{-1}\bm{B}\bm{D_C}^{-1/2}\bm{y} \\ 
  \bm{y}
\end{array}}
\right]$. Notice that $\bm{F}^{-1}$,$\bm{D_C}^{-1/2}$ and $\bm{D_F}^{-1/2}$ are all well defined, since $\bm{F}$ is positive definite, while $\bm{D_C}$ and $\bm{D_F}$ are diagonal and non-zero. We also have that ${\footnotesize
\left[ 
\begin{array}{c c} 
  \bm{I} & \bm{F}^{-1}\bm{B} \\ 
  \bm{0} & \bm{I}
\end{array} 
\right]
\left[ 
\begin{array}{c c} 
  \bm{D_F}^{-1/2} & \bm{0} \\ 
  \bm{0} & \bm{D_C}^{-1/2} 
\end{array} 
\right]
v(\bm{y})
= 
\left[ 
\begin{array}{c} 
  \bm{0}\\
  \bm{D_C}^{-1/2}\bm{y}
\end{array}
\right]}$, which combined with equation (2) gives us that $\bm{v}(\bm{y})^\top\bm{D}^{-1/2}\bm{LD}^{-1/2}\bm{v}(\bm{y}) = \bm{y}^\top\bm{D_C}^{-1/2}\bm{S}\bm{D_C}^{-1/2}\bm{y}$. This gives us a bijection from the quadratic form of the normalised Laplacian $\bm{D}^{-1/2}\bm{LD}^{-1/2}$ to the quadratic form of the Schur complement $\bm{D_C}^{-1/2}\bm{SD_C}^{-1/2}$, normalised to the old degrees. 

Now let $\bm{v_1}$ and $\bm{v_2}$ be the smallest two eigenvectors of $\bm{D_C}^{-1/2}\bm{SD_C}^{-1/2}$, and let $W = \text{span}\{\bm{v_1},\bm{v_2}\}$. We have:
\begin{align*}
    \lambda_2(\bm{D_C}^{-1/2}\bm{SD_C}^{-1/2}) &= \max_{\bm{y} \in W}\frac{\bm{y}^\top\bm{D_C}^{-1/2}\bm{SD_C}^{-1/2}\bm{y}}{\bm{y}^\top\bm{y}}\\
    &= \max_{\bm{y} \in W}\frac{\bm{v}(\bm{y})^\top\bm{D}^{-1/2}\bm{LD}^{-1/2}\bm{v}(\bm{y})}{\bm{y}^\top\bm{y}}\\
    &\geq \max_{\bm{y} \in W}\frac{\bm{v}(\bm{y})^\top\bm{D}^{-1/2}\bm{LD}^{-1/2}\bm{v}(\bm{y})}{\bm{v}(\bm{y})^\top\bm{v}(\bm{y})}.
\end{align*}

But if $\bm{y}$ is in the two dimensional subspace $W$, $\bm{v}(\bm{y})$ also lies in some two dimensional vector subspace $W' = \bm{v}(W)$. To see this, notice that $W'$ is indeed a vector subspace since $\bm{v}(\bm{x}+\bm{y}) = \bm{v}(\bm{x})+\bm{v}(\bm{y})$. Also, $W'$ is exactly $2$-dimensional, since it can be spanned by $\bm{v}(\bm{w_1}),v(\bm{w_2})$ and $\bm{v}(\bm{w_1})\neq c\bm{v}(\bm{w_2})$, for any basis $\bm{w_1},\bm{w_2}$ of $W$.

Hence, we have that:
\begin{align*}
    \lambda_2(\bm{D_C}^{-1/2}\bm{SD_C}^{-1/2}) &\geq \max_{\bm{y} \in W}\frac{\bm{v}(\bm{y})^\top\bm{D}^{-1/2}\bm{LD}^{-1/2}\bm{v}(\bm{y})}{\bm{v}(\bm{y})^\top\bm{v}(\bm{y})}\\
    &= \max_{\bm{v} \in v(W)}\frac{\bm{v}^\top\bm{D}^{-1/2}\bm{LD}^{-1/2}\bm{v}}{\bm{v}^\top\bm{v}}\\
    &\geq \min_{\text{dim}(U) = 2}\max_{\bm{v} \in U} \frac{\bm{v}^\top\bm{D}^{-1/2}\bm{LD}^{-1/2}\bm{v}}{\bm{v}^\top\bm{v}}\\
    &= \lambda_2(\bm{D^{-1/2}LD^{-1/2}}). 
\end{align*}

\end{proof}

A consequence of this lemma is that if $G$ has second smallest normalised eigenvalue $\nu_2$, then the Schur complement $\bm{Sc}(G,S)$, normalised to its new degrees also has second smallest normalised eigenvalue at least $\nu_2$. Since the new degrees of the Schur complement, represented by say the matrix $\bm{D_S}$, are entrywise smaller than $\bm{D_C}$, we have that $\lambda_2(\bm{D_S}^{-1/2}\bm{SD_S}^{-1/2}) \geq \lambda_2(\bm{D_C}^{-1/2}\bm{SD_C}^{-1/2})$. 

Let $\bm{f}(\bm{x}) = \bm{D_C}^{1/2}\bm{D_S}^{-1/2}\bm{x}$. Notice that the map $\bm{f}$ forms a bijection between vector spaces of dimension 2, since $\bm{f}(U)$ is 2-dimensional if $U$ is 2-dimensional, and $\bm{f}$ is invertible. We also have that $|\bm{f}(\bm{x})| \geq |\bm{x}|$, since it is entrywise larger. Hence:
\begin{align*}
    \lambda_2(\bm{D_S}^{-1/2}\bm{SD_S}^{-1/2}) &= \min_{\text{dim}(U) = 2}\max_{\bm{x} \in  U}\frac{\bm{x}^\top\bm{D_S}^{-1/2}\bm{SD_S}^{-1/2}\bm{x}}{\bm{x}^\top\bm{x}}\\
    &\geq \min_{\text{dim}(U) = 2}\max_{\bm{x} \in U}\frac{\bm{x}^\top\bm{D_S}^{-1/2}\bm{SD_S}^{-1/2}\bm{x}}{\bm{f}(\bm{x})^\top\bm{f}(\bm{x})}\\
    &= \min_{\text{dim}(U) = 2}\max_{\bm{x} \in U}\frac{\bm{f}(\bm{x})^\top\bm{D_C}^{-1/2}\bm{SD_C}^{-1/2}\bm{f}(\bm{x})}{\bm{f}(\bm{x})^\top\bm{f}(\bm{x})}\\
    &=\lambda_2(\bm{D_C}^{-1/2}\bm{SD_C}^{-1/2}).
\end{align*}

The algorithm in~\cite{DurfeePPR17} also constructs determinant sparsifiers. For our recursive guarantees to hold, we also have to demonstrate that spectral gap of the normalised Laplacian can not change that much after the sketching process. 
\begin{lemma}\label{lem:detsparsifypreservesnu2}
Consider a graph $G$ so that the spectral gap of its normalised Laplacian is $\nu_2$. Let $H$ be the graph produced by determinant sparsifier $\textsc{DetSparsify}$ in~\cite{DurfeePPR17} with some $s = n^{1.5}\delta^{-1}$ edges. Then, the spectral gap of the normalised Laplacian for $H$ is at least some $(1 - O(\delta^{1/4}))\nu_2$. 
\end{lemma}

\begin{proof}
Note that DetSparsify (Algorithm 2 of \cite{DurfeePPR17}) essentially samples $s = n^{1.5}\delta^{-0.5}$ edges proportional to the edge leverage scores, and rescaling by a factor of $\exp(\frac{n^2}{2(n-1)s})$. As such, the graph constructed, with weights rescaled back, ie $\exp(-\frac{n^2}{2(n-1)s})H$ is a $\delta^{1/4}$-spectral sparsifier of $G$ (See Theorem 1 of \cite{SpielmanS08}).

Let $\bm{f}(\bm{x}) = \bm{D_G}^{1/2}\bm{D_H}^{-1/2}\bm{x}$. $\bm{f}$ is a mapping that forms a bijection between two dimensional subspaces $U \rightarrow \bm{f}(U)$. Since $H$ is a $\delta^{1/4}$-spectral sparsifier of $G$, their degrees differ by a factor of at most $1 + \delta^{1/4}$, and as such $\bm{f}(\bm{x})^\top\bm{f}(\bm{x}) \geq (1 - 2\delta^{1/4})\bm{x}^\top\bm{x}$. We have:
\begin{align*}
    \lambda_2(\bm{D_H}^{-1/2}\bm{HD_H}^{-1/2}) &= \min_{\text{dim}(U) = 2}\max_{\bm{x} \in  U}\frac{\bm{x}^\top\bm{D_H}^{-1/2}\bm{HD_H}^{-1/2}\bm{x}}{\bm{x}^\top\bm{x}}\\
    &= \min_{\text{dim}(U) = 2}\max_{\bm{x} \in  U}\frac{\bm{f}(\bm{x})^\top\bm{D_G}^{-1/2}\bm{HD_G}^{-1/2}\bm{f}(\bm{x})}{\bm{x}^\top\bm{x}}\\ 
    &\geq (1 - 2\delta^{1/4})\min_{\text{dim}(U) = 2}\max_{\bm{x} \in  U}\frac{\bm{f}(\bm{x})^\top\bm{D_G}^{-1/2}\bm{HD_G}^{-1/2}\bm{f}(\bm{x})}{\bm{f}(\bm{x})^\top\bm{f}(\bm{x})}\\ 
   &\geq (1 - 3\delta^{1/4})\min_{\text{dim}(U) = 2}\max_{\bm{x} \in  U}\frac{\bm{f}(\bm{x})^\top\bm{D_G}^{-1/2}\bm{GD_G}^{-1/2}\bm{f}(\bm{x})}{\bm{f}(\bm{x})^\top\bm{f}(\bm{x})}\\ 
   &= \lambda_2(\bm{D_G}^{-1/2}\bm{GD_G}^{-1/2}).
\end{align*}
\end{proof}

We now demonstrate that we can solve determinants on expanders. We use the same recursive strategy as in ~\cite{DurfeePPR17}. We begin with a graph $G$ with spectral gap of the normalised Laplacian being $\nu_2$. To calculate the determinant of $\bm{L_G}$, we find a $2$-DD subset of vertices $V_2$ using Lemma 3.5 of \cite{KyngLPSS15}, and $V_1 = V\backslash V_2$ and recursively calculate the following:
\begin{align*}
    \text{det}_+(\bm{L_G}) = \text{det}(\bm{L_{[V_2,V_2]}})\cdot\text{det}_+(\bm{Sc}(\bm{L},V_1)).
\end{align*}

This decomposes into two parts, the determinant of a $2$-DD matrix, and $\text{det}_+(\bm{Sc}(\bm{L},V_1))$, defined to be the determinant of the Schur complement onto the rest of the vertices with a row and column removed. The entirety of the algorithm is the same, with the only difference being the subroutine used to approximate effective resistances. This leads to the following differences:
\begin{enumerate}
    \item In~\cite{DurfeePPR17}, $\text{det}(\bm{L_{[V_2,V_2]}})$ is calculated by simply completing $\bm{L_{[V_2,V_2]}}$ into a Laplacian and recursing. Since our algorithm involves expansion, for our recursion to hold, we have to demonstrate that this new graph, with all of $V_1$ being contracted to a single vertex, does not have $\nu_2$ that decreases by a large amount. Instead of doing this, we slightly modify the recursion, and use Lemma~\ref{lem:2ddreff}. 
    \item In the second half of the recursion, $\text{det}_+(\bm{Sc}(\bm{L},V_1))$ can potentially have too many edges if explicitly represented, so the determinant is calculated by first constructing a determinant sparsifier, and then recursing. If there were no sketches involved, Lemma~\ref{lem:schurincreasesnu2} would tell us that taking Schur complements can only increase $\nu_2$, allowing us to recurse. However, we have to demonstrate that the sketching process does not reduce $\nu_2$ by too much. 
\end{enumerate}

\subsection{$2$-DD matrix}

Let $\bm{L^{V_2}}$ be the completion of $\bm{L_{[V_2,V_2]}}$ into a Laplacian, and let the new vertex be $x$. By removing the single vertex $x$, we have the following:
\begin{align*}
\text{det}(\bm{L_{[V_2,V_2]}}) &= \text{det}_+(\bm{L^{V_2}})\\
&= \deg(x)\cdot\text{det}_+(\bm{Sc}(\bm{L^{V_2}},V_2)).
\end{align*}

By Lemma~\ref{lem:2ddreff}, we have that we can calculate the effective resistances on all edges of the $2$-DD matrix that are not adjacent to the new vertex $x$, which gives us the effective resistances of the edges in $\bm{Sc}(\bm{L^{V_2}},V_2)$. This allows us to construct a determinant preserving sparsifier of the Schur complement $H^{V_2}$ using SchurSparse as in \cite{DurfeePPR17}, taking total time $\Otil(m + n\varepsilon^{-2})$ to calculate the effective resistances, and $\Otil(s)$ time to sample $s$ edges. In particular we require choices of $s$ and $\varepsilon$ such that $\frac{n^2\varepsilon^2}{s}, \frac{n^3}{s^2}\leq \delta^2$, for the guarantees of SchurSparse to hold, so we pick $s = n^{1.5}\delta^{-1}, \varepsilon = n^{-0.25}\delta^{0.5}$, giving us total time $\Otil(s + n\varepsilon^{-2}) = \Otil(n^{1.5}\delta^{-1})$. 

We now have a sketch $H^{V_2}$ of size $n^{1.5}\delta^{-1}$. From Lemma~\ref{lem:2ddreff}, we know that $\bm{Sc}(\bm{L^{V_2}},V_2)$ is a graph with expansion at least some constant. Lemma~\ref{lem:detsparsifypreservesnu2} then guarantees that after the sketching process, the $\nu_2$ decreases by at most some $(1 - O(\delta^{1/4}))$ factor. This guarantees that $H^{V_2}$ is also a graph with constant expansion, allowing us to continue by recursing. We are essentially picking a specific 2-DD set to recurse on, so error guarantees still hold through a similar argument as in \cite{DurfeePPR17}. We see below in Figure~\ref{fig:recursive_structure} the new recursive structure. \cite{DurfeeKPRS17} prove that the total error in each layer is sufficiently small. We note that the fact that the choice of subset was not important in the error analysis, so we have the same error guarantees. As a result of our recursion on the right half reducing the problem size by only 1 as opposed to a factor of 2, our recursion will have twice the depth, but this also does not affect error guarantees. 

\begin{figure}
\caption{Recursive structure of DetApprox from \cite{DurfeePPR17}, with overlaid changes}
\centering
\includegraphics[width=\textwidth]{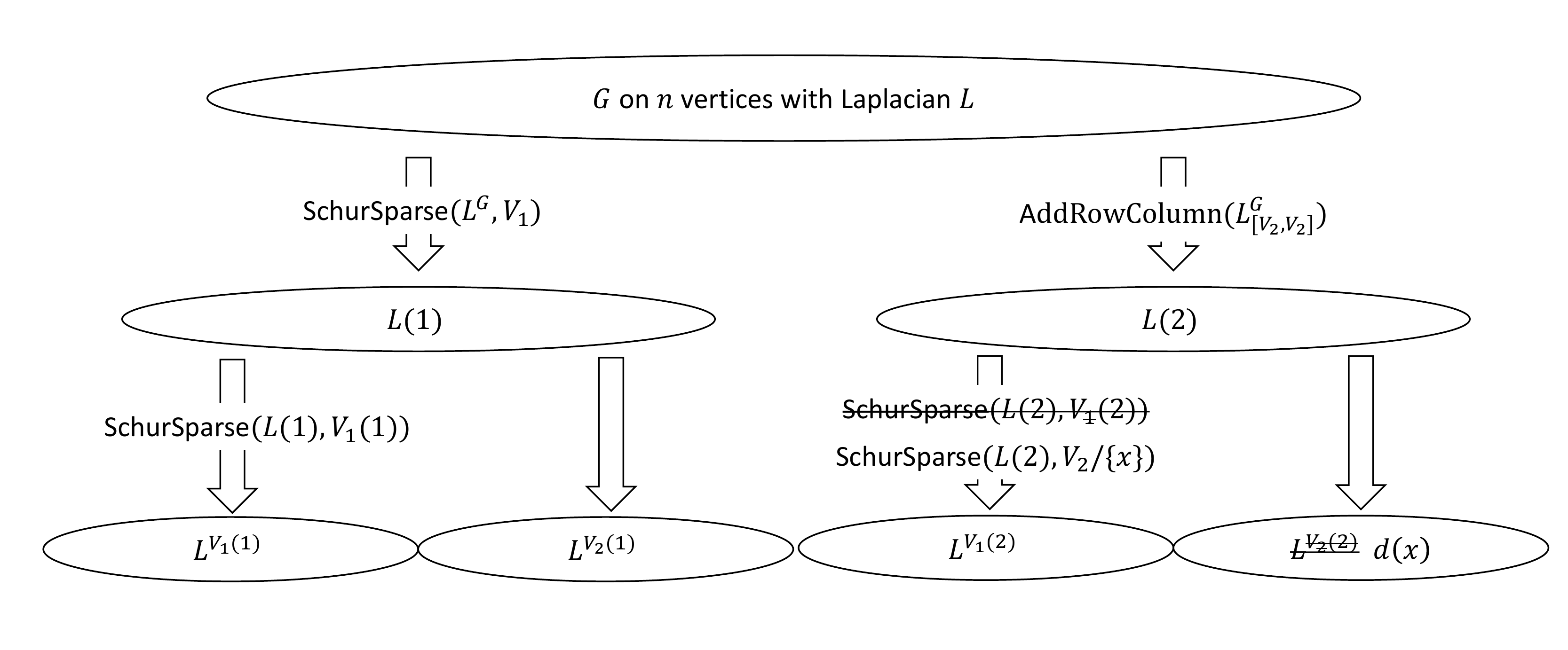}
\label{fig:recursive_structure}
\end{figure}

\subsection{Schur Complement}\label{sect_detschur}

As in the case of~\cite{DurfeePPR17}, we do not have to calculate effective resistances on the Schur complement. Since Schur complements preserve effective resistances, we can instead calculate the effective resistances on $\bm{L}$ instead of $\bm{Sc}(\bm{L^{V_2}},V_1)$.  

As such, the only difference is in maintaining expansion guarantees through the sketching process. Again, just as in the $2$-DD case, by Lemma~\ref{lem:detsparsifypreservesnu2}, the determinant sparsifier produced $H^{V_2}$ is a rescaled spectral sparsifier of $\bm{Sc}(\bm{L^{V_2}},V_1)$, and as such the sampled graph $H^{V_2}$'s second smallest normalised eigenvalue can decrease by at most a factor of $(1 - 2\delta^{1/4})$. Suppose the graph at the topmost level has spectral gap of the normalised Laplacian $\nu_2$. The resulting graph also always has spectral gap at least some $\Otil(\nu_2)$, since its expansion drops by at mmost a factor of $2\delta^{1/4}$, or gets set back to a constant value in the case of it being generated from the $2$-DD component. 

Putting this all together, we see that at each level of our recursion, the costs are dominated by the time required to produce the effective resistance estimates, taking $\Otil(s + n\varepsilon^{-2}\nu_2^{-3}) = \Otil(n^{1.5}\delta^{-1}\nu_2^{-3})$ time. Given that there are only $O(\log{n})$ layers, the total time complexity of the algorithm comes out to $\Otil(m + n^{1.5}\delta^{-1}\nu_2^{-3})$. \end{proof}

A corollary of the above method is that we can calculate the determinant of any $(1 + \alpha)$-diagonally dominant matrix $\bm{M}$ that is a submatrix of a Laplacian (Theorem~\ref{thm:intro-thm3}).
\begin{proof}\emph{(of~Theorem~\ref{thm:intro-thm3})}
The proof of Lemma~\ref{lem:2ddreff} gives us that the Schur complement of $\bm{L_M}$ onto $V\backslash \{x\}$ has $\nu_2$ at least $\alpha/(1 + \alpha)$. Using the same recursive step as in Section~\ref{sect_detschur}, we essentially work on a graph with $\nu_2$ at least $\alpha/(1 + \alpha)$, giving us a time complexity of $\Otil(m + n^{1.5}\delta^{-1}(1 + \alpha^{-3}))$. 
\end{proof}

\pagebreak

\printbibliography

\pagebreak

\appendix

\section{Deferred Proofs}

We provide some proofs of relatively standard facts here. We first prove a lower bound on the effective resistances.

\factrefflowerbound*

\begin{proof}
For any vector $\bm{x}$, we have that the quadratic form $\bm{x}^\top(\bm{D} + \bm{A})\bm{x}$ can be written as:
\begin{align*}
    \bm{x}^\top(\bm{D} + \bm{A})\bm{x} = \sum_{(u,v) \in E}w_{uv}(\bm{x}_u + \bm{x}_v)^2.
\end{align*}
Thus, $\bm{D} + \bm{A} \succeq \bm{0},$ and $\bm{L} = \bm{D} - \AA \preceq 2 \DD.$ 
Hence, for vectors perpendicular to $\bm{1},$ we have
$    \frac{1}{2}\bm{D}^{-1} \preceq \bm{L}^+.$
This gives us:
\begin{align*}
    R(u,v) &= (\bm{1_u} - \bm{1_v})^\top\bm{L}^+(\bm{1_u} - \bm{1_v})\\
    &\geq (\bm{1_u} - \bm{1_v})^\top\frac{1}{2}\bm{D}^{-1}(\bm{1_u} - \bm{1_v})\\
    &= \frac{1}{2}(\frac{1}{d_u}+\frac{1}{d_v}).
\end{align*}

An alternative proof is as follows. By Rayleigh's Monotonicity Law, the effective resistance between $u$ and $v$ will only decrease if the resistances of edges are lowered. WLOG, let $d_u < d_v$. Construct the graph $G'$ where we decrease the resistance of every single edge not adjacent to $u$ to $0$, noting that by Rayleigh's Monotonocity Law, the resistance of $u$ and $v$ in this graph is a lower bound for the resistance between $u$ and $v$ in $G$. Notice now we can collapse all vertices that are not $u$ into a single vertex, since they are all connected by $0$ resistance edges, resulting in a circuit with parallel edges from $u$ to $v$. This gives us that the effective resistance between $u$ and $v$ is just $\frac{1}{d_u}$, lower bounding the effective resistance between any two vertices by $\max \{\frac{1}{d_u},\frac{1}{d_v}\}$. 

\end{proof}

Next we prove a result on the convergence of random walks.

\lemrandomwalk*

\begin{proof}
Let $\bm{N} = \bm{D}^{-1/2}\bm{L}\bm{D}^{-1/2}$ be the normalized Laplacian. Since this matrix is symmetric, it has an orthogonal basis of eigenvectors. Let $\bm{\psi_1}, \ldots, \bm{\psi_n}$ denote an orthonormal eigenbasis for $\NN$ with eigenvalues $\nu_1 \le \ldots \le \nu_n$ respectively.  
We have that:
\begin{align*}
    \frac{1}{2}\bm{I} + \frac{1}{2}\bm{AD}^{-1} = \bm{I} - \frac{1}{2}\bm{D}^{1/2}\bm{N}\bm{D}^{-1/2},
\end{align*}
which gives us that for each $i$ if $,\bm{D}^{1/2}\bm{\psi_i}$ is an eigenvector of  $\frac{1}{2}\bm{I} + \frac{1}{2}\bm{AD}^{-1}$ with eigenvalue  $(1 - \frac{1}{2}\nu_i).$
Given an initial distribution $\bm{p}$, we write $\bm{D}^{-1/2}\bm{p}$ as a linear combination of the orthogonal eigenvectors of $\bm{N},$ i.e., for some $\alpha_i \in \rea,$ we have $   \bm{D}^{-1/2}\bm{p} = \sum_i \alpha_i \bm{\psi_i}.$ Thus, $    \bm{p} = \sum_i \alpha_i \bm{D}^{1/2}\bm{\psi_i},$ and,
\begin{align*}
    (\frac{1}{2}\bm{I} + \frac{1}{2}\bm{AD}^{-1})\bm{p} &= \sum_i \alpha_i(1 - \frac{1}{2}\nu_i)\bm{D}^{1/2}\bm{\psi_i}.
\end{align*}
In particular, we have that $\nu_1 = 0$ and  $\bm{\psi_1} = \frac{1}{\bm{1}^{\top} \DD \bm{1}}\bm{D}^{1/2}\bm{1}.$
Thus, its coefficient $\alpha_1$ is
\begin{align*}
\alpha_1 = \bm{\psi_1}^\top\sum_i \alpha_i \bm{\psi_i} = \bm{\psi_1}^\top\bm{D}^{-1/2}\bm{p} = 
\frac{1}{\bm{1}^{\top} \DD \bm{1}}\bm{1}^{\top} \pp
= \frac{1}{\bm{1}^\top\bm{D}\bm{1}},
\end{align*}

We bound each individual term after $t$ steps. Let $\bm{\pi}$ be the stationary distribution of the lazy random walk, $(\bm{\pi})_u = \frac{d_u}{\sum_u d_u}.$ Thus $\ppi = \frac{1}{\bm{1}^{\top}\DD \bm{1}}{\DD\bm{1}} = \alpha_1\bm{D}^{1/2}\psi_1$. We have that the $\ell_1$ distance between the stationary distribution and the random walk distribution after $t$ steps is bounded by: 
\begin{align*}
    \norm{(\frac{1}{2}\bm{I} + \frac{1}{2}\bm{AD}^{-1})^{t}\bm{p} - \bm{\pi}}_1 &= \norm{\sum_i \alpha_i(1 - \frac{1}{2}\nu_i)^{t}\bm{D}^{1/2}\bm{\psi_i} - \bm{\pi}}_1\\
    &= \norm{\sum_{i > 1} \alpha_i(1 - \frac{1}{2}\nu_i)^{t}\bm{D}^{1/2}\bm{\psi_i}  + \alpha_1\bm{D}^{1/2}\psi_1 - \bm{\pi}}_1\\
    &= \norm{\sum_{i > 1} \alpha_i(1 - \frac{1}{2}\nu_i)^{t}\bm{D}^{1/2}\bm{\psi_i}}_1\\
    &= \norm{\bm{D}^{1/2}\sum_{i > 1} \alpha_i(1 - \frac{1}{2}\nu_i)^{t}\bm{\psi_i}}_1
\end{align*}
Applying Cauchy-Schwarz, we have that this is bounded by,
\begin{align*}
    \norm{(\frac{1}{2}\bm{I} + \frac{1}{2}\bm{AD}^{-1})^{t}\bm{p} - \bm{\pi}}_1
    & \le \norm{\DD^{1/2}\bm{1}}_2 \norm{\sum_{i > 1} \alpha_i(1 - \frac{1}{2}\nu_i)^{t}\bm{\psi_i}}_2 \\
    & \le \sqrt{\bm{1}^\top \DD \bm{1}} \sqrt{\sum_{i > 1} \alpha_i^2 (1 - \frac{1}{2}\nu_i)^{2t}} \\
    & \le e^{-t\nu_2/2} \sqrt{\bm{1}^\top \DD \bm{1}} \sqrt{\sum_{i > 1} \alpha_i^2} \\
    & \le e^{-t\nu_2/2} \sqrt{\bm{1}^\top \DD \bm{1}} \sqrt{\pp^{\top}\DD^{-1} \pp} \\
    & \le e^{-t\nu_2/2} \frac{nd_{\max}}{d_{\min}},
\end{align*}
where the last two inequalities follow from $\sum_i \alpha_i^2 = \pp^{\top}\DD^{-1} \pp \le 1.$

\end{proof}

\end{document}